\DeclareMathOperator*{\argmax}{argmax}
\DeclareMathOperator*{\argmin}{argmin}
\renewcommand{\phi}{\varphi}
\renewcommand{\S}{\mathcal{S}}
\newcommand{\NP}{\mathrm{NP}}
\newcommand{\coNP}{\mathrm{coNP}}
\newcommand{\Z}{\mathbb Z}
\newcommand{\R}{\mathbb R}
\newcommand{\N}{\mathbb N}
\newrobustcmd{\Zinfty}{\kl[\Zinfty]{\mathbb Z^{\infty}}}
\knowledge{\Zinfty}{notion}
\newcommand{\Rinfty}{\mathbb R^{\infty}}
\newrobustcmd{\Ninfty}{\kl[\Ninfty]{\mathbb N^{\infty}}}
\knowledge{\Ninfty}{notion}
\newcommand{\re}{\rightarrow}
\newcommand{\rew}[1]{\xrightarrow{#1}}
\newrobustcmd{\VMin}{\kl[\VMin]{V_{\mathrm{Min}}}}
\knowledge{\VMin}{notion}
\newrobustcmd{\VMax}{\kl[\VMax]{V_{\mathrm{Max}}}}
\knowledge{\VMax}{notion}
\newcommand{\val}{\mathrm{val}}
\newrobustcmd{\AttrInf}{\kl[\AttrInf]{\mathrm{Attr}_\infty}}
\knowledge{\AttrInf}{notion}
\newrobustcmd{\Attr}{\kl[\Attr]{\mathrm{Attr}}}
\knowledge{\Attr}{notion}
\newrobustcmd{\MP}{\kl[\MP]{\mathrm{MP}}}
\knowledge{\MP}{notion}
\newrobustcmd{\En}{\kl[\En]{\mathrm{En}}}
\knowledge{\En}{notion}
\newrobustcmd{\Enp}{\kl[\Enp]{\mathrm{En}^+}}
\knowledge{\Enp}{notion}
\newrobustcmd{\Enm}{\kl[\Enm]{\mathrm{En}^-}}
\knowledge{\Enm}{notion}
\newrobustcmd{\First}{\kl[\First]{\mathrm{First}^+}}
\knowledge{\First}{notion}
\newrobustcmd{\valG}{\kl[\valG]{\val_{\game}}}
\newrobustcmd{\FirstG}{\kl[\FirstG]{\mathrm{First}^+_{\game}}}
\newrobustcmd{\EnG}{\kl[\EnG]{\mathrm{En}_{\game}}}
\newrobustcmd{\EnGame}[1]{\kl[\EnGame]{\mathrm{En}_{#1}}}
\newrobustcmd{\EnpG}{\kl[\EnpG]{\mathrm{En}^+_{\game}}}
\knowledge{\valG}[\EnG|\EnpG|\FirstG|\EnGame]{notion}
\newrobustcmd{\PsiFirst}{\kl[\PsiFirst]{\Psi_{\mathrm{First}^+}}}
\knowledge{\PsiFirst}{notion}
\newrobustcmd{\PsiEnp}{\kl[\PsiEnp]{\Psi_{\mathrm{En}^+}}}
\knowledge{\PsiEnp}{notion}
\newrobustcmd{\PsiVal}{\kl[\PsiVal]{\Psi_{\mathrm{val}}}}
\knowledge{\PsiVal}{notion}
\newrobustcmd{\const}{\kl[\const]{\models}}
\knowledge{\const}{notion}
\newcommand{\game}{\mathcal G}
\newrobustcmd{\summ}{\kl[\summ]{\mathrm{sum}}}
\knowledge{\summ}{notion}
\newrobustcmd{\Gphi}[1]{\kl[\Gphi]{\game_{#1}}}
\knowledge{\Gphi}{notion}
\newcommand{\Min}{\mathrm{Min}}
\newcommand{\Max}{\mathrm{Max}}
\newrobustcmd{\negG}[1]{\kl[\negG]{N_{#1}}}
\knowledge{\negG}{notion}
\renewcommand{\neg}{\text{neg}}
\newcommand{\compl}[1]{{#1}^{\mathsf{c}}}
\newrobustcmd{\escF}{\kl[\escF]{\mathsf{esc}_F}}
\newrobustcmd{\escFgame}[1]{\kl[\escFgame]{\mathsf{esc}_F^{#1}}}
\knowledge{\escF}[\escFj|\escFgame]{notion}
\newrobustcmd{\escFj}{\kl[\escF]{\mathsf{esc}_{F_j}}}
\newrobustcmd{\finEsc}{\kl[\finEsc]{F^{<\infty}}}
\knowledge{\finEsc}{notion}
\newrobustcmd{\finEscMax}{\kl[\finEscMax]{F^{<\infty}_{\Max}}}
\knowledge{\finEscMax}{notion}
\newrobustcmd{\finEscMin}{\kl[\finEscMin]{F^{<\infty}_{\Min}}}
\knowledge{\finEscMin}{notion}
\newcommand{\dual}[1]{\overline{#1}}
\newrobustcmd{\wplus}{\kl[\wplus]{w_+}}
\knowledge{\wplus}[\wminus]{notion}
\newrobustcmd{\wminus}{\kl[\wminus]{w_-}}
\newrobustcmd{\PsiGKK}{\kl[\PsiGKK]{\Psi_{\mathsf{GKK}}}}
\knowledge{\PsiGKK}{notion}
\newrobustcmd{\PsiDPPI}{\kl[\PsiDPPI]{\Psi_{\mathsf{DPPI}}}}
\knowledge{\PsiDPPI}{notion}
\newrobustcmd{\PsiQD}{\kl[\PsiQD]{\Psi_{\mathsf{QD}}}}
\knowledge{\PsiQD}{notion}
\newrobustcmd{\PsiOSI}{\kl[\PsiOSI]{\Psi_{\mathsf{OSI}}}}
\knowledge{\PsiOSI}{notion}
\newcommand{\GKK}{\mathsf{GKK}}
\newcommand{\DPPI}{\mathsf{DPPI}}
\newrobustcmd{\leqMon}{\mathrel{\kl[\leqMon]{\leq_{\mathrm{mon}}}}}
\knowledge{\leqMon}{notion}
\definecolor{Blue Marine}{HTML}{022687}
\definecolor{Very Dark Blue}{HTML}{0c1b44}
\definecolor{Dark Green}{HTML}{005715}
\definecolor{Very Dark Green}{HTML}{1e400c}
\definecolor{Dark Magenta}{HTML}{005715}
\definecolor{Dark Ruby Red}{HTML}{580507}
\definecolor{Dark Blue Sapphire}{HTML}{053641}
\definecolor{Dark Gamboge}{HTML}{be7c00}
\definecolor{Blue Sapphire}{HTML}{005f73} 
\definecolor{Blue Matt}{HTML}{094c7b} 
\definecolor{Blue Dark}{HTML}{1B4C6E} 
\definecolor{Golden}{HTML}{E27400}
\definecolor{Brown}{HTML}{ae740e}
\definecolor{Gamboge}{HTML}{ee9b00}
\definecolor{Ruby Red}{HTML}{9b2226}
\patchcmd{\@addmarginpar}{\ifodd\c@page}{\ifodd\c@page\@tempcnta\m@ne}{}{}
\let\doendproof\endproof
\renewcommand\endproof{~\hfill$\qed$\doendproof}
\newif \iffull
\begin{document}
\title{Fast value iteration: A uniform approach to efficient algorithms for energy games}
\titlerunning{Fast value iterations for energy games}
%
\author{\inst{}}
\institute{}
\author{Michaël Cadilhac \inst{1} \and Antonio Casares\inst{2}
	\and
Pierre Ohlmann\inst{3}} 
\authorrunning{M. Cadilhac, A. Casares and P. Ohlmann}
%
\institute{DePaul University, Chicago, IL, USA \and University of Warsaw, Poland \and CNRS, Université Aix-Marseille, LIS, France}
\maketitle              

\begin{abstract}
	We study algorithms for solving parity, mean-payoff and energy games.
	We propose a systematic framework, which we call Fast value iteration, for describing, comparing, and proving correctness of such algorithms.
	The approach is based on potential reductions, as introduced by Gurvich, Karzanov and Khachiyan (1988).
	This framework allows us to provide simple presentations and correctness proofs of known algorithms, unifying the Optimal strategy improvement algorithm by Schewe (2008) and the quasi dominions approach by Benerecetti et al. (2020), amongst others.
	The new approach also leads to novel symmetric versions of these algorithms, highly efficient in practice, but for which we are unable to prove termination.
	We report on empirical evaluation, comparing the different fast value iteration algorithms, and showing that they are competitive even to top parity game solvers.

\keywords{Mean-payoff games \and energy games \and pseudopolynomial algorithm \and value iteration}
\end{abstract}



\section{Introduction}\label{sec:introduction}

\paragraph{Mean-payoff and energy games.}
The games under study are infinite duration games where two players, Min and Max, move a token over a finite directed graph with no sink, where the edges of the graph are labelled by payoffs in $\Z$.
When playing a mean-payoff game, the players optimise (minimise or maximise, respectively) the asymptotic average payoff.
In an energy game, they instead optimise the supremum cumulative sum of payoffs within $[0,\infty]$.
These games are positionaly determined~\cite{EM79,BFLMS08}: the two players can play optimally even when restricted to strategies that only depend on the current position of the game.
We refer to Figure~\ref{fig:mean_payoff2} for a complete example.

\begin{figure}[h]
\begin{center}
\includegraphics[width=0.8\linewidth]{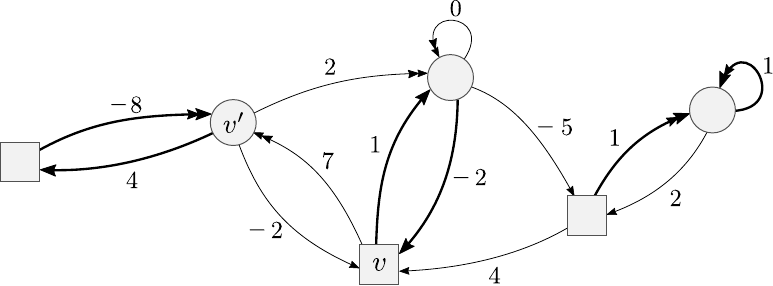}
\end{center}
\caption{Example of a game; circles belong to Min and squares belong to Max.
From left to right, the mean-payoff values are $-2,-2,-\frac 1 2, - \frac 1 2,1$ and $1$, and positional strategies for mean-payoff values are identified in bold.
Energy values are $0,2,9,0,\infty$ and $\infty$, and with optimal strategies given by the double-headed arrows.}
\label{fig:mean_payoff2}
\end{figure}

In this paper, we are interested in the problem of computing energy values of the vertices in a given game which we call \emph{solving} the energy game.
It easily follows from positional determinacy that the energy value of a vertex is finite if and only if its mean-payoff value is non-positive~\cite{BCDGR11}.
Therefore solving an energy game also solves the so called threshold problem for the associated mean-payoff game.
As it turns out, all state-of-the-art algorithms~\cite{BDM24,BV05,BCDGR11,DKZ19,Ohlmann22,Schewe08} for the mean-payoff threshold problem actually solve the energy game.

Mean-payoff values achieved by positional strategies can be computed in polynomial time, and therefore the threshold problem belongs to $\NP \cap \coNP$.
Despite numerous efforts, no polynomial algorithm is known.
Mean-payoff games are known~\cite{Puri95} to generalise parity games~\cite{EJ91,Mostowski:1991} which also belong to $\NP \cap \coNP$ but for which algorithms with quasipolynomial runtime were recently devised~\cite{CJKLS17}.
However, quasipolynomial algorithms for parity games do not generalise to mean-payoff games~\cite{FGO18}.

\paragraph{Algorithmic paradigms.}
There are two well-established paradigms for solving energy games: value iteration (sometimes called ``progress measure'') and strategy improvement.
The standard value iteration for energy games (which we will call ""Simple value iteration"", SVI for short) was introduced by Brim et al.~\cite{BCDGR11}.
While subject to good theoretical (pseudopolynomial) bounds, it is well-known to be prohibitively slow in practice, as its worse-case behaviour is frequently displayed.
On the other hand, strategy improvement algorithms~\cite{BSV04} typically solve practical instances in a constant number of iterations.
Although it offers a useful categorization of older algorithms, the value iteration versus strategy improvement dichotomy fails to accurately describe a new wave of efficient algorithms.

In recent years, multiple hybrid algorithms -- borrowing ideas from both paradigms -- have been put forward.
In~2008, Schewe~\cite{Schewe08} introduced an algorithm called ""Optimal strategy improvent"" (OSI) for solving parity or mean-payoff games. 
As explained by Luttenberger~\cite{Luttenberger08}, Schewe's presentation of OSI is in fact closer to value iteration, but it can also be formally cast as a strategy improvement in a carefully generalised framework allowing for nondeterministic strategies.
In~2019, Dorfman et al.~\cite{DKZ19} presented a value iteration method augmented by a carefully crafted acceleration mechanism (which we call DKZ),  thereby improving on the best theoretical guarantees (this algorithm can be seen as a reformulation of the GKK algorithm~\cite{GKK88}, see also~\cite{Ohlmann22} for further analyses).
Based on the idea of quasi dominions (similar to Fearnley's snares~\cite{Fearnley10a} in a strategy improvement context), Benerecetti et al.~\cite{BDM24} proposed another such acceleration mechanism, obtaining the algorithm QDPM.
Some of these algorithms are extremely efficient: a version of OSI is a key component in the LTL-synthesis tool STRIX~\cite{MSL18,LMS20}, which is one of the top competitors in the annual synthesis competition SYNTCOMP~\cite{SyntCompReport22}. On the other hand, QDPM is currently the top-performing mean-payoff game solver~\cite{BDM24} while, remarkably, preserving state-of-the-art theoretical guarantees.


Although differences in the performances of these algorithms have been observed empirically~\cite{BDM24}, we lack a good understanding of how they compare to each other theoretically, and more generally, of what are the fundamental algorithmic mechanisms that lead to efficient game solvers in practice.

\subsubsection*{Contributions.} Our contributions are as follows.

  \vspace{-2mm}
\paragraph{(1) Fast value iteration framework.}
  We consider "potential reductions", as introduced by Gurvich, Karzanov and Khachiyan~\cite{GKK88}, to design a systematic method for producing algorithms for energy games, which we call the \emph{"fast value iteration" framework}.
  A "potential" is a mapping which assigns a positive weight to each vertex. Such a potential naturally induces a transformation (a "potential reduction") of the game, which preserves the weight of every cycle and thus the values in the mean-payoff game.
  The "fast value iteration" meta-algorithm (Algorithm~\ref{algo-abstrac-potentialReduction}) simply iterates on potential reductions until a fixpoint is reached.
  This meta-algorithm can be instantiated on any given class of "potentials", leading to different algorithms, whose correctness is automatically guaranteed under mild assumptions on the potentials (Theorem~\ref{thm:potential-reduction-algorithm}).
  Interestingly, the framework also provides a \emph{symmetric} meta-algorithm, for which termination is observed in practice, but we have not been able to prove it theoretically.

  The algorithms from the "fast value iteration" framework share some properties that make them convenient for practical applications. The main reason STRIX uses OSI is its support for \emph{modularity}. Since games coming  from LTL-formulas are typically huge, an important feature is to be able to solve them piecewise, avoiding loading the entire game into memory.
  We show that all algorithms within the "fast value iteration" framework are well-suited for this modular approach, which also opens exciting perspectives for parallelised implementations.
  
  \vspace{-2mm}
\paragraph{(2) Unifying and simplifying existing algorithms.} 
	We revisit various algorithms in the light of the above framework. 
	Naturally, the classic SVI~\cite{BCDGR11} is captured (Example~\ref{ex:simpleVI}), as well as the algorithms GKK~\cite{GKK88} and DKZ~\cite{DKZ19} (Section~\ref{subsec:GKK-DKZ}), whose original presentations fit the potential reduction framework.
	
	More interestingly, we also capture algorithms showcasing an excellent performance in practice, defying the common belief that VI algorithms are slow.
	We unify and simplify the algorithms OSI by Schewe~\cite{Schewe08} and the involved QDPM  by Benerecetti et al.~\cite{BDM24}.
	Our presentations are streamlined (see Section~\ref{sec:comparison} for details), leading to immediate correctness proofs.
	It also allows to isolate the core algorithmic idea underlying these two algorithms, which is a natural adaptation of Dijkstra's algorithm to the two-player setting. 
	We call the obtained reinterpretation of OSI and QDPM within the fast value iteration framework, the Positive Path Iteration (PPI).
	  
	The abstraction provided by our approach sets the stage to easily craft new algorithms.
  Showcasing its applicability, we propose a dynamic version of PPI (DPPI), which provably breaks the theoretical barrier set by OSI and QDPM (Lemma~\ref{lem:comparison-potentials}).
  Many possibilities for future work are proposed in the conclusion.
    \vspace{-2mm}
\paragraph{(3) Empirical evaluation.} We compare the implementations of the algorithms described in the fast value iteration framework to OSI and QDPM, as well as to the top parity game solvers.
This evaluation shows: (i) fast value iteration algorithms are highly efficient in practice, and especially robust towards hard instances; (ii) "alternating versions" of the algorithms not only terminate, but are remarkably efficient.

%

\section{Preliminaries}\label{sec:preliminaries}
\AP A ""game"" is a tuple $\game=(G,w,\VMin, \VMax)$, where $G=(V,E)$ is a finite sinkless directed graph, $w:E \to \Z$ is a labelling of its edges by integer weights, and $\VMin,\VMax$ is a partition of $V$.
We set $n=|V|,m=|E|$ and $W=\max\limits_{e\in E} |w(e)|$.
We say that vertices in $\intro*\VMin$ belong to Min and that those in $\intro*\VMax$ belong to Max.
We now fix a game $\game = (G,w,\VMin,\VMax)$.

\AP We simply write $vv'$ for an edge $(v,v')\in E$. A path is a (possibly empty, possibly infinite) sequence of edges $\pi = e_0e_1 \dots$, with $e_{i}=v_{i}v_{i}'$, such that $v_i' = v_{i+1}$.
We write $v_0 \re v_1  \re \dots$ to denote such a path.
\AP The sum of a finite path $\pi$ is the sum of the weights appearing on it, we denote it by $\intro*\summ(\pi)$.
Given a finite or infinite path $\pi=e_0e_1 \dots$ 
and an integer $k \geq 0$, we let $\pi_{< k} = e_0 e_1 \dots e_{k-1}$, 
and we let $\pi_{\leq k} = \pi_{< k+1}$.
Note that $\pi_{<0}$ is the empty path, and that $\pi_{<k}$ has length $k$. 
By convention, the empty path starts and ends in all vertices.

\AP A ""valuation"" is a map $\val: \Z^\omega \to \R\cup\{\infty\}$ assigning a potentially infinite value to infinite sequences of weights.
\AP We use $\Rinfty, \intro*\Zinfty$ and $\intro*\Ninfty$ to denote respectively $\R \cup \{\infty\}, \Z \cup \{\infty\}$ and $\N \cup \{\infty\}$.
\AP The four valuations studied in this paper are the mean-payoff, energy, positive-energy, and first-if-positive valuations given by:
\vspace{-1.5mm}
\[\arraycolsep=1.4pt\def\arraystretch{1.6}
\begin{array}{lclclclclcl}
 	\intro*\MP(w) &=& \limsup_k \frac 1 k \sum_{i=0}^{k-1} w_i &\in&  \R,   & \phantom{...} & \intro*\Enp(w)  &=& \sum_{i=0}^{k_\neg - 1} w_i  &\in&  \Ninfty,\\
	\intro*\En(w)  &=& \sup_k \sum_{i=0}^{k-1} w_i  &\in& \Ninfty,  & \phantom{...} & \intro*\First(w)  &=& \max(w_0,0)  &\in&  \N,
\end{array}
\]
where $w=w_0w_1\dots$ is a sequence of weights and $k_\neg = \min \{k \mid w_k <0\} \in \Ninfty$ is the first index of a negative weight.
For technical convenience, we will also consider games in which weights are potentially (positively) infinite.
We extend the definitions of $\En,\Enp$ and $\First$ to words in $(\Zinfty)^\omega$, using the same formula.
Note that for any $w \in (\Zinfty)^\omega$ we have $\Enp \leq \En$.
The four valuations are illustrated on a given sequence of weights in Figure~\ref{fig:valuations}. 

\begin{figure}[ht]
	\begin{center}
		\includegraphics[width=\linewidth]{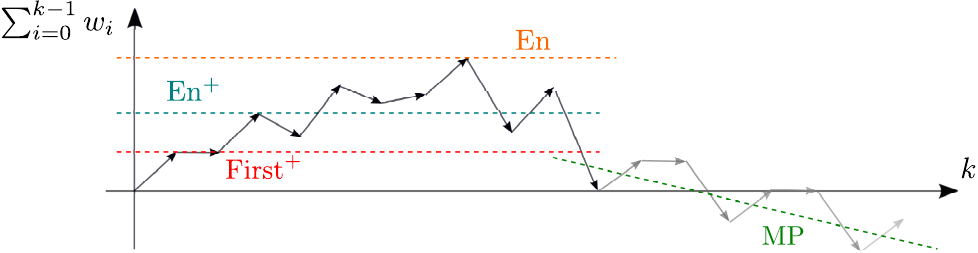}
	\end{center}
	\caption{The three valuations over a given sequence of weights.
		The mean-payoff value is given by the slope of the line, which corresponds to the long-term average.
		In this case, the mean-payoff is $\leq 0$, and both $\En$ and $\Enp$ are finite.}
	\label{fig:valuations}
\end{figure}

\AP A ""strategy"" for Min is a map $\sigma: \VMin \to E$ such that for all $v \in \VMin$, it holds that $\sigma(v)$ is an edge outgoing from $v$.
We say that a (finite or infinite) path $\pi = e_0e_1 \dots$ is consistent with $\sigma$ if whenever $e_i=v_iv_{i+1}$ is defined and $v_i \in \VMin$, it holds that $e_i = \sigma(v_i)$.
\AP We write in this case $\pi \const \sigma$.
\AP Strategies for Max are defined similarly and written $\tau : \VMax \to E$.
The theorem below states that the three valuations are determined with positional strategies.
It is well known for $\MP$ and $\En$ and easy to prove for $\Enp$. 
We remark that positional determinacy also holds for the two energy valuations $\En$ and $\Enp$ over games where we allow for infinite weights.

\begin{theorem}[\cite{EM79,BFLMS08}]\label{thm:positionality}
	For each $\val \in \{\MP,\En,\Enp\}$, there exist strategies $\sigma_0$ for Min and $\tau_0$ for Max such that for all $v \in V$ we have
	\[
	\sup_{\pi \const \sigma_0} \val(w(\pi)) = \inf_{\sigma} \sup_{\pi \const \sigma} \val(w(\pi)) = \sup_{\tau} \inf_{\pi \const \tau} \val(w(\pi)) = \inf_{\pi \const \tau_0} \val(w(\pi)),
	\]
	where $\sigma, \tau$ and $\pi$ respectively range over strategies for Min, strategies for Max, and infinite paths from $v$.
\end{theorem}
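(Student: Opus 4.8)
The plan is to establish the chain of equalities by proving the ``easy'' inequalities that hold for any valuation, and then closing the loop via positional strategies. First I would observe that the middle inequality
\[
\sup_{\tau} \inf_{\pi \const \tau} \val(w(\pi)) \leq \inf_{\sigma} \sup_{\pi \const \sigma} \val(w(\pi))
\]
holds for trivial reasons: for any fixed $\sigma$ and $\tau$ there is exactly one infinite path from $v$ consistent with both (the graph has no sink, so it is well-defined and infinite), call it $\pi_{\sigma,\tau}$, and $\inf_{\pi \const \tau}\val(w(\pi)) \leq \val(w(\pi_{\sigma,\tau})) \leq \sup_{\pi \const \sigma}\val(w(\pi))$; taking $\sup_\tau$ then $\inf_\sigma$ gives the claim. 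Likewise $\sup_{\pi \const \sigma_0}\val(w(\pi)) \geq \inf_\sigma \sup_{\pi \const \sigma}\val(w(\pi))$ and $\sup_\tau \inf_{\pi \const \tau}\val(w(\pi)) \geq \inf_{\pi \const \tau_0}\val(w(\pi))$ are immediate from the definitions of $\inf$ and $\sup$ over strategies, for \emph{any} choice of positional $\sigma_0,\tau_0$. So the entire statement reduces to exhibiting positional $\sigma_0,\tau_0$ with
\[
\sup_{\pi \const \sigma_0} \val(w(\pi)) \;\leq\; \sup_\tau \inf_{\pi \const \tau}\val(w(\pi))
\qquad\text{and}\qquad
\inf_{\pi \const \tau_0}\val(w(\pi)) \;\geq\; \inf_\sigma \sup_{\pi \const \sigma}\val(w(\pi)),
\]
since these two, combined with the three easy inequalities, force all four quantities to coincide. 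By symmetry (swapping the roles of the players and negating weights, noting that $\MP$ and $\En$ behave appropriately, and handling $\Enp$ separately) it suffices to produce an optimal positional strategy for one player.

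For $\MP$ and $\En$ I would not reprove positional determinacy from scratch but cite~\cite{EM79,BFLMS08} as indicated; the only thing to check is that the finite-weight statements there extend to weights in $\Zinfty$. For that I would argue that an infinite weight can be simulated: replace an edge of weight $+\infty$ by a large finite weight $N$ and observe that for $\En$ (and $\Enp$), once $N$ exceeds, say, $nW'$ where $W'$ is the largest finite weight magnitude, the $\En$-value of any vertex is either unchanged or becomes $+\infty$ exactly when a play can be forced through such an edge; a positional optimal strategy for the modified finite game is then positional optimal for the original. Alternatively one notes that Min will never want to traverse a $+\infty$ edge while Max always does when able, so these edges can be contracted/removed after a preprocessing pass, reducing to the finite case.

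For $\Enp$ I would give a direct argument, as the excerpt promises (and which Lemma~\ref{lemma: Compute-Enp-values} will make algorithmic). The key point is that $\Enp$ depends only on the prefix of the play up to the first negative edge, so one can reason by a fixed-point / attractor computation: let $x_v \in \Ninfty$ be the supremum over $\sigma$ of the infimum over $\tau$... more precisely define the value function as the least fixed point of the natural one-step operator on $\Ninfty^V$ (for $v \in \VMin$, take the min over outgoing edges of $w(e) + x_{\text{head}(e)}$ if $w(e)\geq 0$ and $0$ if $w(e)<0$; dually max for $v \in \VMax$), show this operator is monotone on the complete lattice $\Ninfty^V$ so a least fixed point exists, and then read off a positional strategy for each player from the fixed point (Min picks an edge attaining the min; Max picks an edge attaining the max). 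The main obstacle, and the only place needing care, is verifying that these positionally-induced strategies actually achieve the fixed-point values against \emph{all} opponent strategies — for Max this requires checking that from $\tau_0$ the accumulated sum before the first negative edge is at least $x_v$ no matter what Min does, which follows by tracking the quantity $x_{v_i} + \summ(\pi_{<i})$ along a play and showing it is non-decreasing until a negative edge is hit; for Min one argues dually that this quantity cannot exceed $x_v$, using finiteness of the graph to rule out the play staying forever on non-negative edges without the bound being witnessed. This is the heart of the argument but is elementary given the lattice-theoretic setup.
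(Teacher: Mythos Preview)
Your proposal is sound, but you should be aware that the paper does not actually supply a proof of this theorem: it is stated with citations to~\cite{EM79,BFLMS08} for $\MP$ and $\En$, together with the remark that the $\Enp$ case is ``easy to prove'' and that Lemma~\ref{lemma: Compute-Enp-values} furnishes an algorithmic proof. Your write-up is therefore already more detailed than what appears in the paper, and your reduction to the two ``hard'' inequalities via the trivial minimax bounds is exactly the right scaffolding.

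Where you genuinely diverge is in the treatment of $\Enp$. The paper's argument is purely algorithmic: the two-player Dijkstra procedure of Lemma~\ref{lemma: Compute-Enp-values} computes the values and, implicitly, exhibits positional optimal strategies for both players (Min's choices arise in step~2, Max's in step~1, and the residual set $\compl F$ certifies value $\infty$ for Max). Your route is lattice-theoretic: define the one-step Bellman operator on $(\Ninfty)^V$, invoke monotonicity to get a least fixed point, and then extract positional strategies from any edge realising the min/max at each vertex. Both approaches are correct; the algorithmic one has the advantage of directly delivering the $O(m+n\log n)$ bound the paper needs downstream, while yours is cleaner as a pure existence proof and makes the fixed-point structure explicit. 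Your discussion of extending to weights in $\Zinfty$ is also more explicit than the paper's, which simply asserts that positional determinacy persists in that setting without argument. One small caution: the symmetry you invoke (swap players, negate weights) is clean for $\MP$ but does not literally send $\En$ to itself, so if you were writing this out in full you would need both directions for $\En$ from the cited references rather than deriving one from the other.
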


\AP The quantity defined by the equilibrium above is called the ""value"" of $v$ in the $\val$ game, and we denote it by $\intro*\valG(v) \in \Rinfty$;
the strategies $\sigma_0$ and $\tau_0$ are called $\val$-optimal, note that they do not depend on $v$.
\AP The two main algorithmic problems we are interested in are 
(i) computing the value $\En_\game(v)$ of a given vertex $v$ in a game, and 
(ii) decide whether $\MP_\game(v) \leq 0$  (threshold problem).
The following result relates the values in the mean-payoff and energy games; this direct consequence of Theorem~\ref{thm:positionality} was first stated in~\cite{BCDGR11}.

\begin{corollary}[\cite{BCDGR11}]\label{cor:mean_payoffs_and_energies}
	For all $v \in V$ it holds that
	\[
	\MP_\game(v) \leq 0 \iff \EnG(v) < \infty \iff \EnG(v) \leq (n-1)W.
	\]
\end{corollary}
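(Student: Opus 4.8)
The plan is to prove the two non-trivial implications, the third being immediate since $\En_\game(v) \le (n-1)W$ trivially gives $\En_\game(v) < \infty$. Concretely, I would establish $\En_\game(v) < \infty \Rightarrow \MP_\game(v) \le 0$ and $\MP_\game(v) \le 0 \Rightarrow \En_\game(v) \le (n-1)W$. The common tool is Theorem~\ref{thm:positionality} specialised to the single vertex $v$: for any valuation $\val$ and any Min strategy $\sigma$ one has $\val_\game(v) \le \sup_{\pi \const \sigma} \val(w(\pi))$, with equality when $\sigma$ is a suitable positional strategy.

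For the first implication I would fix a positional $\En$-optimal Min strategy $\sigma_0$. Every infinite path $\pi$ from $v$ with $\pi \const \sigma_0$ then satisfies $\summ(\pi_{<k}) \le \En(w(\pi)) \le \En_\game(v) < \infty$ for all $k$, so dividing by $k$ and letting $k \to \infty$ gives $\MP(w(\pi)) = \limsup_k \frac 1 k \summ(\pi_{<k}) \le 0$; taking the supremum over such $\pi$ yields $\MP_\game(v) \le 0$. This direction is essentially a one-line computation once positionality is invoked.

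For the second (and main) implication I would fix a positional $\MP$-optimal Min strategy $\sigma_0$ and pass to the subgraph $H$ obtained by keeping only Min's chosen edge at each Min vertex. The key step is to show that every cycle $C$ of $H$ passing through a vertex reachable from $v$ in $H$ has $\summ(C) \le 0$: otherwise, reaching $C$ from $v$ inside $H$ and then looping around $C$ forever produces a path $\pi$ from $v$ with $\pi \const \sigma_0$ and $\MP(w(\pi)) = \summ(C)/|C| > 0$, contradicting $\sup_{\pi \const \sigma_0}\MP(w(\pi)) = \MP_\game(v) \le 0$. Granting this, I would bound an arbitrary prefix $\pi_{<k}$ of a $\sigma_0$-consistent path from $v_0 = v$ by a standard cycle-excision argument: decompose it into a simple path from $v_0$ to $v_k$, of length at most $n-1$ and hence of sum at most $(n-1)W$, together with finitely many cycles of $H$ through vertices reachable from $v$, each of non-positive sum; this gives $\summ(\pi_{<k}) \le (n-1)W$, and taking suprema over $k$ and over $\sigma_0$-consistent paths from $v$ gives $\En_\game(v) \le (n-1)W$.

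The only real obstacle is the cycle-excision bookkeeping in the second implication: one must check that each excised cycle genuinely lies in $H$ and passes through a vertex reachable from $v$ in $H$, so that the sign property applies to it, and that the residual simple path truly has at most $n-1$ edges. Everything else — the passage from bounded partial sums to non-positive mean-payoff, and the reduction $\val_\game(v) \le \sup_{\pi \const \sigma_0}\val(w(\pi))$ — is routine.
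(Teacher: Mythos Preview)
Your proof is correct and matches what the paper intends: the paper gives no detailed argument, merely calling the corollary a ``direct consequence of Theorem~\ref{thm:positionality}'' and citing~\cite{BCDGR11}, and your two implications unpack precisely this, using a positional $\En$-optimal strategy for one direction and a positional $\MP$-optimal strategy together with the standard cycle-excision bound for the other. (The paper's $(n-1)N$ is a typo for $(n-1)W$, which is the quantity you use.)
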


\AP Therefore computing $\En$-values of the games is harder than the mean-payoff threshold problem.
It is easy to deduce $\En$-optimal strategies for Min from the knowledge of the $\En$-values:
we select Min-edges that minimise the sum of the edge's weight and the energy of the destination.
However no knowledge is gained about Max strategies besides the winning region (over which $\En$ values are $\infty$). 
As explained in the introduction, all state-of-the-art algorithms for the threshold problem actually compute $\En$ values.
This shifts our focus from mean-payoff to energy games.

\paragraph{Attractors.}
\AP Given a subset $S \subseteq V$, the attractor $\intro*\Attr_\game^\Max(S)$ to $S$ in $\game$ is defined to be the set of vertices $v$ such that Max can ensure to reach $S$ from $v$.

\paragraph{Simple games.}
A finite path $v_0 \re v_1 \re \dots \re v_k$ is simple if there is no repetition in $v_0,v_1,\dots,v_{k-1}$; note that a cycle may be simple.
\AP A game is ""simple"" if all simple cycles have nonzero sum.
The following result is folklore and states that one may reduce to a simple game at the cost of a linear blow up on $W$.
It holds thanks to the fact that positive mean-payoff values are $\geq 1/n$ (with $n=|V|$), which is a well-known consequence of Theorem~\ref{thm:positionality}.

\begin{lemma}\label{lem:lifting_simplicity}
	Let $\game=(G,w,\VMin,\VMax)$ be an arbitrary game.
	The game $\game'=(G,w',\VMin,\VMax)$, with $w' = (n+1)w-1$, is simple and has the same vertices of positive mean-payoff values as $\game$.
\end{lemma}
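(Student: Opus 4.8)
The plan is to prove two things separately: first, that $\game'$ is simple, and second, that $\game$ and $\game'$ have the same set of vertices with positive mean-payoff value. Throughout, write $w' = (n+1)w - 1$ for the modified weight function, so that for any finite path $\pi$ of length $k$ we have $\summ_{w'}(\pi) = (n+1)\summ_w(\pi) - k$.

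For simplicity, let $C = v_0 \re v_1 \re \dots \re v_k$ be a simple cycle in $G$, so $1 \le k \le n$. We must show $\summ_{w'}(C) \neq 0$. We have $\summ_{w'}(C) = (n+1)\summ_w(C) - k$. If $\summ_w(C) \le 0$ then $\summ_{w'}(C) \le -k < 0$, so the cycle has nonzero sum. If $\summ_w(C) \ge 1$ (recall $w$ is integer-valued, so $\summ_w(C) \ge 1$ whenever it is positive), then $\summ_{w'}(C) \ge (n+1) - k \ge (n+1) - n = 1 > 0$, again nonzero. Hence every simple cycle of $\game'$ has nonzero sum, i.e.\ $\game'$ is simple.

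For the second claim, I would invoke Theorem~\ref{thm:positionality} to reduce the statement about mean-payoff values to a statement about cycles traversed by positional strategies. The key quantitative input, as the text notes, is that a positive mean-payoff value is always $\ge 1/n$: more precisely, for a fixed positional strategy $\sigma$ for Min and a fixed positional strategy $\tau$ for Max, the unique infinite play $\pi$ they induce from any vertex is eventually periodic with period a simple cycle $C$ of length $k \le n$, and $\MP_w(\pi) = \summ_w(C)/k$. Thus $\MP_w(\pi) \le 0$ iff $\summ_w(C) \le 0$ iff $\summ_w(C) \le -1$ is impossible to conclude directly, but $\summ_w(C) \le 0$, which by the computation above gives $\summ_{w'}(C) \le -k < 0$, hence $\MP_{w'}(\pi) < 0$; conversely $\MP_w(\pi) > 0$ means $\summ_w(C) \ge 1$, whence $\summ_{w'}(C) \ge (n+1) - k \ge 1 > 0$, so $\MP_{w'}(\pi) > 0$. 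In other words, for every such cycle, $\summ_w(C) \le 0 \iff \summ_{w'}(C) < 0$ and $\summ_w(C) > 0 \iff \summ_{w'}(C) > 0$. Using positional determinacy (Theorem~\ref{thm:positionality}) for $\MP$ in both games, the value at $v$ is nonpositive in $\game$ iff Min has a positional strategy forcing all reachable cycles to be nonpositive, iff (by the equivalence just established) Min has a positional strategy forcing all reachable cycles to be negative in $\game'$, iff the value at $v$ is nonpositive (indeed negative) in $\game'$. Taking complements gives that the positive-value vertices coincide.

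The main obstacle is being careful about the logic connecting cycle sums to values via positional determinacy: one has to note that the relevant cycles are exactly the simple cycles of length at most $n$ arising as the periodic part of plays between positional strategies, and that the sign equivalence $\summ_w(C) \le 0 \iff \summ_{w'}(C) < 0$ holds uniformly for all of them because $k$ ranges only over $\{1,\dots,n\}$ and $w$ is integral — the factor $n+1$ is precisely what is needed to dominate the $-k$ correction when $\summ_w(C) \ge 1$. Everything else is the routine arithmetic of $\summ_{w'}(\pi) = (n+1)\summ_w(\pi) - |\pi|$ already recorded above.
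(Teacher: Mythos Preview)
Your proof is correct and follows precisely the approach the paper hints at: the paper does not give a detailed proof but states that the result ``holds thanks to the fact that positive mean-payoff values are $\geq 1/n$, which is a well-known consequence of Theorem~\ref{thm:positionality}.'' Your argument spells this out via the integrality of cycle sums and positional determinacy, which is exactly what is intended.
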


\section{Fast value iteration: A meta-algorithm based on potential reductions}\label{sec:potential_reductions}

\subsection{Potential reductions}\label{subsec:potential-reductions}
Fix a game $\game = (G=(V,E),w,\VMin,\VMax)$.
\AP A ""potential"" is a map $\phi:V \to \Ninfty$.
Potentials are partially ordered coordinatewise. We write $\phi = 0$ if $\phi(v)=0$ for all $v\in V$.
Given an edge $vv' \in E$, we define its $\phi$-modified weight to be
\[
w_\phi(vv') = \begin{cases} 
	\infty & \text{ if } \phi(v),\phi(v') \text{ or } w(vv') \text{ is } \infty, \\
	w(vv') + \phi(v') - \phi(v) & \text{ otherwise}.
\end{cases}
\]
\AP The ""$\phi$-modified game"" $\intro*\Gphi{\phi}$ is simply the game $(G,w_\phi,\VMin,\VMax)$; informally, all weights are replaced by the modified weights.
Note that the underlying graph does not change, in particular paths in $\game$ and $\game_\phi$ are the same.
\AP Moving from $\game$ to $\game_\phi$ for a given potential $\phi$ is called a ""potential reduction"".

Weights of cycles are preserved by finite potential reductions, and therefore, as an easy consequence of positionality (Theorem~\ref{thm:positionality}), mean-payoff values are preserved.
Note that any edge outgoing from a vertex $v$ with potential $\phi(v) = \infty$ has weight $\infty$ in the modified game, therefore $v$ has $\En$ and $\Enp$-values $\infty$ in $G_\phi$.
Note also that sequential applications of potential reductions correspond to reducing with respect to the sum of the potentials: $(\game_\phi)_{\phi'} = \game_{\phi+\phi'}$.

Potential reductions were introduced by Gallai~\cite{Gallai58} for studying network-related problems such as shortest-paths problems.
In the context of mean-payoff or energy games, they were introduced in~\cite{GKK88} and later sometimes rediscovered.

\subsection{The fast value iteration meta-algorithm}
\AP A ""potential assigner"" is a function $\Psi$ that assigns a potential $\Psi(\game):V \to \Ninfty$ to each game $\game$. 
A "potential assigner" $\Psi$ induces a \emph{fast value iteration algorithm} (called "$\Psi$-FVI") as follows:
successively apply potential reductions using the potentials given by $\Psi$, until a game $\game'$ is reached with $\Psi(\game')(V) \subseteq \{ 0, \infty\}$.
For an arbitrary potential assigner, this algorithm might not terminate, or provide a final game $\game'$ carrying irrelevant information.
However, we show that under mild hypotheses on $\Psi$, this algorithm terminates, and $\En_{\game'} = \Psi(\game')$, with the vertices with $\En$-value $0$ corresponding to the vertices with finite value in the original game.
Moreover, the exact $\En$-values of the original game can be recovered from the sequence of potentials obtained during the computation.

We formalise this idea in Algorithm~\ref{algo-abstrac-potentialReduction} and Theorem~\ref{thm:potential-reduction-algorithm}. To ensure termination, we need to artificially increase the potential of some vertices to $\infty$ whenever a threshold is reached, and then remove Max's attractor to $\infty$.
This technique is standard in value iteration algorithms, see e.g.~\cite{BCDGR11}.

\algdef{SE}[DOWHILE]{Do}{doWhile}{\algorithmicdo}[1]{\algorithmicwhile\ #1}%
\begin{algorithm}

	\caption{\AP ""$\Psi$-Fast value iteration algorithm"".}
	\label{algo-abstrac-potentialReduction}
	\begin{algorithmic}[1]
		\Statex \textbf{Input:} Game $\game$ with $n$ vertices and maximal weight $W$
		\State $\Phi \gets 0$ \Comment{$\Phi$ carries the cumulative sum of potentials}
		\Do
			\State $\phi \gets \Psi(\game)$ 
			\State $\Phi \gets \Phi + \phi$ \Comment{Update cumulative sum over $\game$\footnotemark}\label{line:update-cumulative}
			\State $\game \gets \game_\phi$\label{line:update}	
			\State $A \gets \Attr_{\game}^{\Max}(\Phi^{-1}([(n-1)W+1,\infty]))$\label{line:artificial-infty}
			\State Set $\Phi(v) = \infty$ for all $v \in A$	\label{line:remove-attractor}
			\State $\game \gets \game \setminus A$\label{line:end-of-while}
			
		\doWhile{$\phi \neq 0$ and $\game \neq \emptyset$}
		\State \Return $\Phi$
	\end{algorithmic}
\end{algorithm}
\footnotetext{By a small abuse of notation, we allow to sum potential with different domains. If $\phi\colon V \to \Ninfty$ and $\phi'\colon V' \to \Ninfty$ with $V'\subseteq V$, then $\phi + \phi'(v) = \phi(v)$ for all $v\notin V'$.  }

Let us isolate two relevant properties of potential assigners:
(1) \AP ""Soundness"": for any game $\game$, $\Psi(\game) \leq \EnG$;
(2) \AP ""Completeness"":  for any $\game$, if $\Psi(\game) = 0$ then $\EnG=0$.
We also say that a potential $\phi$ is "sound" over a given game if condition (1) is met.
We may now state our first main result.

%

\begin{restatable}{theorem}{metaAlg}\label{thm:potential-reduction-algorithm}
	Let $\Psi$ be a "sound" and complete potential assigner.
	Then Algorithm~\ref{algo-abstrac-potentialReduction} terminates in at most $n^2W$ iterations, and returns $\Phi = \EnG$.
\end{restatable}

\begin{remark}
	Note that the hypotheses of the theorem are minimal.
	If a potential assigner $\Psi$ is not sound, there is a game $\game$ for which the algorithm returns $\Phi \geq \Psi(\game) > \EnG$.
	If it does not satisfy (ii), there is a game for which the algorithm stops in the first iteration, returning the potential $\Phi = 0 \neq \EnG$.
\end{remark}

\begin{example}[Simple value iteration of Brim et al.~\cite{BCDGR11}]\label{ex:simpleVI}
	Define the potential assigner $\intro*\PsiFirst$ by assigning the potential $\FirstG(v)$, the first-if-positive value, to a vertex $v$.
	This potential is easily computed in linear time as it coincides for each Max (resp. Min) vertex $v$, with the maximal (resp. minimal) value of $\max(w,0)$ where $w$ ranges over outgoing weights.
	Clearly $\PsiFirst \leq \EnG$, since for any sequence of weights $w_0w_1\dots$, it holds that $\First(w_0w_1\dots) \leq \En(w_0w_1\dots)$.
	Finally, if $\PsiFirst=0$, then from any vertex Min can ensure that no positive weight is ever seen, which entails $\EnG=0$. We conclude that $\PsiFirst$ is sound and complete; the fast value iteration algorithm coincides with that of~\cite{BCDGR11}.\footnote{Formally, reducing from complexity $O(n^2mW)$ to $O(nmW)$ requires some additional bookkeeping.}
\end{example}

\begin{example}\label{ex:valuation-yields-potential}
	\AP Any (determined) valuation $\val\colon \Z^\omega \to \Rinfty$  induces a potential assigner $\intro*\PsiVal$, namely, the one that assigns to each game $\game$ the potential given by $\val_\game(v)$. 
	If the valuation satisfies $\val\leq \En$ over weight sequences, then $\PsiVal$ is sound.
	Moreover, if $\val(w_0w_1...) >0$ whenever $w_0>0$, then $\PsiVal$ is complete.
	This includes the previous example, 
	and more interestingly, this includes the valuation $\Enp$, which is the subject of Section~\ref{subsec:PPI}.
\end{example}

Of course, an important requirement over $\Psi$ to make Algorithm~\ref{algo-abstrac-potentialReduction} relevant is that we should be able to compute $\Psi(\game)$ efficiently.
Note that the potential assigner corresponding to the $\En$-values of a game satisfies all the required hypothesis, and makes Algorithm~\ref{algo-abstrac-potentialReduction} terminate in a single iteration. 

\vspace{-1mm}
\paragraph{$\infty$-attraction.} \AP In many occurrences, the algorithm can be simplified by removing lines 6-8 and stopping when a fixpoint is reached (which can be implemented by replacing line 9 with ``{while} $\game_{\Psi(\game)} \neq \game$'').
We say that potential assigners with this property are ""$\infty$-attracting"".
\iffull
We provide easy-to-check sufficient conditions for "$\infty$-attraction"  in Appendix~\ref{app:infAttracting}.
\else\fi

\vspace{-1mm}
\paragraph{Modularity.} 
The framework of "fast value iteration" is specially well suited for a modular approach, allowing to solve games piecewise, as we show next.
A subgame is a pair $(\game',\game)$, with $\game'\subseteq \game$. Let $\S$ be a class of ""subgames"". We say that a "potential assigner" $\Psi$ is ""$\S$-sound"" if for all subgames $(\game',\game)\in \S$ it holds that $\Psi(\game') \leq \EnG|_{\game'}$, that is, the potential is "sound" when applied only to this part of the game. 
(We note that for instance, any "sound" potential is "$\S_{\mathrm{Trap}}$-sound" for the class of subgames $(\game',\game)$ such that $\game'$ is a "$\Min$-trap".) 
Therefore, if $\Psi$ is "$\S$-sound", we can solve subgames in $\S$ partially, and apply the corresponding "potential reduction" in the whole game, progressing towards a computation of the $\En$-values.\\[-1mm]

The rest of the section is devoted to a proof of Theorem~\ref{thm:potential-reduction-algorithm}.
\iffull
Detailed proofs are available in Appendix~\ref{app:metaAlg-correctness}.
\else\fi

\vspace{-1mm}

\paragraph{Termination.} Termination of Algorithm~\ref{algo-abstrac-potentialReduction} is ensured thanks to lines~\ref{line:artificial-infty} and~\ref{line:remove-attractor}: the function $\Phi$ strictly increases in each non-terminating iteration, and it only takes values in $[0,nW]\cup \{\infty\}$, hence the bound $n^2W$.

\vspace{-1mm}

\paragraph{Correctness.}
We now state the key technical theorem enabling our framework.
It describes the effect of "sound" potential reductions over energy values, allowing to combine them. 
From it, we easily derive compositionality of "sound" potentials.

\begin{restatable}[Update of energy values]{theorem}{potentialReduction}\label{thm:potential_reduction_theorem}
	If $\phi$ is "sound" then $\EnG = \phi + \En_{\game_\phi}$.
\end{restatable}

\begin{corollary}[Compositionality]\label{cor:compositionality}
	If $\phi$ is "sound" for $\game$ and $\phi'$ is "sound" for $\game_\phi$ then $\phi + \phi'$ is "sound" for $\game$.
\end{corollary}
\begin{proof}
	As $\phi'$ is "sound" for $\game_\phi$, we have that $\phi' \leq \En_{\game_\phi}$.
	Adding $\phi$ on both sides, we get
	$\phi + \phi' \leq \phi + \En_{\game_\phi}$.
	By Theorem~\ref{thm:potential_reduction_theorem}, the right hand-side is equal to $\EnG$, as desired.
\end{proof}

We are now ready to prove Theorem~\ref{thm:potential-reduction-algorithm}.
\iffull
(The formal proof requires a bit more work regarding vertices sent to $\infty$, see Appendix~\ref{app:metaAlg-correctness}.)\else\fi

\begin{proof}[Informal proof for Theorem~\ref{thm:potential-reduction-algorithm}]
	Let $\game_i$, $\phi_i=\Psi(\game_i)$ and $\Phi_i=\phi_1 + \dots + \phi_{i-1}$ denote the game, potential and cumulative sum at the $i$-th iteration of the algorithm.
	Since $\Psi$ is sound, $\phi_i$ is "sound" for $\game_i$ for all $i$.
	Thus it follows from an easy induction and compositionality that for all $i$, $\Phi_i$ is "sound" for $\game$.
	In particular, for the maximal $i$, Theorem~\ref{thm:potential_reduction_theorem} gives $\Phi_i + \En_{\game_i} = \EnG$, but moreover since $\phi_i=0$ we get by completeness that $\En_{\game_i}=0$ which concludes.
\end{proof}

\subsection{Asymmetry and alternating fast value iteration}\label{subsec:alternating}

"Fast value iteration" is based on successive underapproximations of the energy valuation $\En$, which is inherently asymmetric.
However, the initial problem (solving mean-payoff games) is itself symmetric, which calls for the design of more symmetrical solutions, a recurring theme in the literature~\cite{JMOT20,JMT22,STV15,TOJ22}.

\paragraph{Dual algorithm computing $\Max$-values.}
Let $\intro*\dual{\game}$ be the game obtained by swapping $\VMin$ and $\VMax$ and relabelling the weights by $\dual{w} = -w$.
\AP The two games are essentially equivalent, for instance the mean-payoff values in $\dual{\game}$ and $\game$ are opposite.
However asymmetric algorithms such as value iterations behave differently over each game; this is useful for instance if one wants to compute Max strategies in $\game$, which are output by running value iterations in the dual.
But this still does not provide a symmetric solution.

\paragraph{Alternating fast value iteration.}
\AP We now consider ""alternating versions"" of the algorithm, by working with potentials in $\phi : V \to \Z \cup \{\pm \infty\}$.
The algorithm applies potential reductions corresponding to $\Psi$ and its dualized version $\dual{\Psi}$ \emph{on the same game} in an alternating fashion
, until all vertices are sent to $+\infty$ or $-\infty$.
Naturally, when a vertex is set to $+\infty$ or $-\infty$, the adequate attractor is computed and removed from the game.

Assuming the "potential assigner" $\Psi$ is "sound", since "sound" "potential reductions" do not alter winning regions, the algorithm is correct and Min's winning region is the preimage of $-\infty$ by the final potential.
Termination, however, is not easily guaranteed.
Interestingly, we observe experimentally that, for some potential assigners, this alternating algorithm always terminates, and it is even remarkably fast (see Section~\ref{sec:empirical}).
We leave as an interesting open problem to determine for which potential assigners (if any) this algorithm terminates (see conclusion).


%
%


\section{Instances of fast value iteration and theoretical comparisons}\label{sec:comparison}

We have already shown (Example~\ref{ex:simpleVI}) how SVI instantiates in our framework.
In this section, we introduce further "potential assigners" to capture known efficient algorithms for energy games, and prove their "soundness".
This provides a streamlined and unified presentation of (versions of) the algorithms OSI~\cite{Schewe08} and QDPM~\cite{BDM24} (Section~\ref{subsec:PPI}), namely the positive path iteration algorithm (PPI).
We also propose a dynamic variant "DPPI", corresponding to a "potential assigner" generating "potentials" with provably larger values.
At the end of the section we also discuss the GKK algorithm~\cite{GKK88}, and then provide formal comparisons between the four algorithms stated in our framework.

In all cases, we find that the algorithms are easier to explain over "simple games", which we will assume without loss of generality (see Lemma~\ref{lem:lifting_simplicity}); note also that simplicity is preserved by "potential" reductions.

\subsection{The positive path iteration algorithm}\label{subsec:PPI}

We now study the "fast value iteration" algorithm corresponding to the "potential assigner" $\intro*\PsiEnp(\game)=\EnpG$.
\AP We call it the ""Positive path iteration algorithm (PPI)"".
It is immediate to check that the "potential assigner" $\PsiEnp$ is "sound" and complete (see Example~\ref{ex:valuation-yields-potential}), so Theorem~\ref{thm:potential-reduction-algorithm} applies, directly giving correctness of "PPI".
Moreover, we can in this case simplify the algorithm by removing lines~\ref{line:artificial-infty}-\ref{line:remove-attractor} in Algorithm~\ref{algo-abstrac-potentialReduction}, because $\PsiEnp$ is "$\infty$-attracting".
\iffull
We refer to Appendix~\ref{app:infAttracting} for a proof of this fact.
\else\fi

\begin{proposition}\label{prop:PPI-is-infAttracting}
	The "potential assigner" $\PsiEnp$ is "$\infty$-attracting".
\end{proposition}

We let $N_\game$ denote the set of vertices from which Min can ensure to immediately see a negative vertex: $v \in \VMax$ (resp. $\VMin$) belongs to $N$ if and only if all outgoing edges (resp. some outgoing edge) have weight $<0$.
Note that computing $\Enp$-values in $\game$ corresponds to solving a variant of the energy game which stops whenever $N$ is reached.
It turns out that this problem is (efficiently) tractable, thanks to two-player game extensions of Dijkstra's algorithm.
In fact, two seemingly distinct algorithms are known, corresponding to OSI~\cite{Schewe08} and QPDM~\cite{BDM24}.
Remarkably, Khachiyan, Gurvich and Zhao~\cite{KGZ06} solved the same problem\footnote{This corresponds to Theorem 1 in~\cite{KGZ06}, case $(i)$ with blocking systems $\mathcal B_2$.} earlier and in a different context (with an algorithm similar to Schewe's).

\paragraph{Two algorithms for computing $\Enp$.} We now describe the two algorithms, respectively extracted from~\cite{Schewe08} and~\cite{BDM24}. We first introduce some notation. 
\AP For a subset $F\subseteq V$, a vertex $v\in F$ and an edge $vv'$, we define $\escF(vv') = w(vv')$ if $v'\notin F$, and $\escF(vv') = \infty$, if $v'\in F$. We define the "escape value" of a vertex as:
\[ \intro*\escF(v) = \begin{cases}
	\min\{\escF(vv') \mid w(vv') \geq 0\}, \text{ if } v\in \VMin,\\
	\max\{\escF(vv') \mid w(vv') \geq 0\}, \text{ if } v\in \VMax.
\end{cases} \]

We will only consider subsets $F \subseteq \compl{\negG{\game}}$, so Max vertices have a non-negative outgoing edge and Min vertices have only non-negative outgoing edges, in particular $\escF(v) \geq 0$.
It can be seen as the minimal weight that $\Min$ can force to see while leaving~$F$ immediately from $v$, or $\infty$ is she cannot force to leave $F$ in one step, assuming Max is constrained to playing non-negative edges.
\AP We further let $\intro*\finEsc$ denote the set of vertices with finite $\escF$, and $\intro*\finEscMax$ and $\intro*\finEscMin$ their intersections with $\VMax$ and $\VMin$.
Last, for $v\in V$, the notation $\phi_v(v) \gets x$ indicates that $\phi_v$ is the "potential" defined by $\phi_v(v) = x$ and $\phi_v(v') = 0$ for $v'\neq v$. 

\vspace{-5mm}
\hspace{-0.65cm}
\begin{minipage}[t]{0.52\textwidth}
	\begin{algorithm}[H]
	\centering
	\caption{Subprocedure in OSI}\label{alg:PPI_OSI}
		\begin{algorithmic}
		\Statex \textbf{Input:} Simple game $\game$
		\State $\Phi \gets 0$
		\State $F\gets V \setminus N_\game$
		\While{$\finEsc \neq \emptyset$}
			\If{$\finEscMax \neq \emptyset$}
				\State let $v \in \finEscMax$
				\State $\Phi(v) \gets \max\limits_{w(vv')\geq0} w(vv')+\Phi(v')$
			\ElsIf{$\finEscMin \neq \emptyset$}
				\State let $v \in \finEscMin$ minimizing \\ $\qquad \qquad \qquad m=\min\limits_{v' \in F} w(vv')+\Phi(v')$
				\State $\Phi(v) \gets m$
			\EndIf
		\EndWhile
		\State $\Phi_v(v) \gets \infty$ for all $v\in F$
		\State \Return $\Phi$
	\end{algorithmic}
	
	\end{algorithm}
	\end{minipage}
	\hfill
	\begin{minipage}[t]{0.48\textwidth}
		\begin{algorithm}[H]
			\centering
			\caption{Subprocedure in QDPM}\label{alg:PPI_QD}
		\begin{algorithmic}
			\Statex \textbf{Input:} Simple game $\game$
			\State $\Phi = 0$
			\State $F\gets V \setminus N_\game$
			\While{$\finEsc \neq \emptyset$}
			\State let $v\in \argmin_F\escF(v)$
			\State $\phi_v(v) \gets \escF(v)$\label{line:pickv}
			\State $F \gets F\setminus \{v\}$
			\State $\game \gets \game_{\phi_v}$
			\State $\Phi = \Phi + \phi_v$
			\EndWhile
			\State $\Phi_v(v) \gets \infty$ for all $v\in F$
			\State \Return $\Phi$
		\end{algorithmic}
		\end{algorithm}
	\end{minipage}

\vskip1em

\begin{restatable}[Adapted from~\cite{KGZ06,Schewe08,BDM24}]{theorem}{correctnessPPI}\label{thm:correctnessPPI}
	Algorithms~\ref{alg:PPI_OSI} and~\ref{alg:PPI_QD} both compute $\EnpG$, and both can be implemented to run in $O(m+n\log n)$ operations.
\end{restatable}

\iffull
A detailed proof is given in Appendix~\ref{app:correctnessPPI}.
In Appendix~\ref{app:comparisons} we include detailed comparisons between the "Positive path iteration algorithm" (PPI), and the algorithms OSI and QDPM.
\else\fi

\subsection{A new fast value iteration algorithm}\label{subsec:new-DPPI}

\AP Drawing inspiration from Algorithms~\ref{alg:PPI_OSI} and~\ref{alg:PPI_QD} above, we introduce another "potential assigner", leading to a "fast value iteration algorithm" which we call ""Dynamic positive path iteration (DPPI)"".
Note that both algorithms above compute the Min attractor to $\negG{\game}$ over non-negative edges, which corresponds exactly to the set of vertices with finite $\Enp$, and obtain the values of $\Enp$ by backtracking.
We will also backtrack over the same attractor, and just as in Algorithm~\ref{alg:PPI_QD}, we make "potential" updates on the fly.
The difference is in the precise way in which we choose the vertices, which enables in our case that some of the "potential" updates may cause new edges to become positive, which will then be taken into account, sometimes leading to a "potential" $>\Enp$.

\begin{algorithm}
	\caption{Computation of the $\intro*\PsiDPPI$-potential.}
	\label{alg:DPPI}
	\begin{algorithmic}[1]
		\Statex \textbf{Input:} Simple game $\game$
		\State $F\gets V \setminus \negG{\game}$
		\While{$\finEsc \neq \emptyset$}
		\If{there is $v\in (\argmin \escF(v))\cap \VMin$}
		{$\phi_v(v) \gets \escF(v)$}
		\Else{ let $v\in \argmax_{v\in \finEscMax} \escF(v)$ and $\phi_v(v) \gets \escF(v)$}
		\EndIf \label{line:endIf}
		\State $F \gets F\setminus \{v\}$
		\State $\game \gets \game_{\phi_v}$
		\EndWhile
		\State $\phi_v(v) \gets \infty$ for all $v\in F$\label{line:afterWhile-DPPI}
		\State \Return $\sum_{v\in V}\phi_v$
	\end{algorithmic}
\end{algorithm}


\begin{restatable}{lemma}{correctnessDPPI}
	The "potential assigner" $\PsiDPPI$ is "sound" and "complete" for simple games.
\end{restatable}

We see DPPI as a marginal improvement over PPI, but an improvement nonetheless, showing that the barrier imposed by PPI can be broken, motivating future work.
Figure~\ref{fig:exampleDPPI} shows a game where DPPI performs fewer iterations than PPI, while Lemma~\ref{lem:comparison-potentials} below proves that for any game $\game$, $\Psi_\DPPI(\game) \geq \Psi_{\Enp}(\game)$.

\begin{figure}[h]
	\begin{center}
		\includegraphics[width=0.34\linewidth]{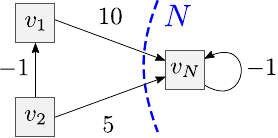}
		\caption{A game $\game$ (all vertices belong to $\Max$) where DPPI performs a single iteration, as $\PsiDPPI(\game)=[v_1 \to 10;v_2\to 9;v_N \to 0]=\En_\game$. In contrast, PPI requires two iterations since $\Enp=[v_1 \to 10;v_2\to 5;v_N \to 0]$.}\label{fig:exampleDPPI}
	\end{center}
	\vspace{-8mm}
\end{figure}

\subsection{The GKK algorithm}\label{subsec:GKK-DKZ}

We include a short discussion about the GKK algorithm; a more detailed modern exposition, including state-of-the-art upper bounds and comparison with the related approach of Dorman et al.~\cite{DKZ19}, was proposed by Ohlmann~\cite{Ohlmann22}.

\AP The ""GKK algorithm"" is the "$\PsiGKK$-fast value iteration" where $\intro*\PsiGKK$ is the "potential assigner" defined as follows.

Let $V_{-}$ be the set of vertices from which Min can ensure that a negative edge is seen before the first positive edge. (Note that $V_{-}$ coincides with $(\EnpG)^{-1}(0)$.)
Likewise, let $V_{+}$ denote the set of vertices from which Max can ensure seeing a positive edge before a negative one; and observe that in a simple game, $V_{+}$ is the complement of $V_{-}$.
\AP Consider the maximal value $\intro*\wplus$ such that from any vertex of $V_{+}$ Max can ensure to add up to $\wplus$ before a negative weight is seen (alternatively, $\wplus$ is the smallest nonzero value of $\EnpG$) ; and dually for $\wminus$.
Note that, if from any vertex in $V_{+}$, Max can ensure to remain in $V_{+}$ while seeing positive vertices, then $\wplus=\infty$.
Clearly $\wplus \leq \Enp \leq \En$ over $V_{+}$.
We define $\PsiGKK(\game)(v)$ to be $\min(\wplus,-\wminus)$ if $v \in V_{+}$ and $0$ otherwise.
Soundness follows from the inequality above, and completeness is easy to prove.
\iffull
Moreover, $\PsiGKK$ is "$\infty$-attracting" (a proof of this fact is included in Appendix~\ref{app:infAttracting}).
\else
Moreover, $\PsiGKK$ is "$\infty$-attracting".
\fi

The "potential"  $\PsiGKK$ has a remarkable symmetric property: the assigned potentials are the same over $\game$ and over its dual $\dual{\game}$: $\PsiGKK = \dual{\PsiGKK}$.\footnote{This was first observed by Ohlmann~\cite{Ohlmann22} leading to an improved upper bound.}
In particular, the algorithm and its "alternating version" coincide.

\subsection{Comparing fast value iteration algorithms}
We now propose formal comparisons between the above "potential assigners".
Intuitively, in order to minimise the number of iterations of a "fast value iteration algorithm", we should seek for "potentials" assigning large values to vertices, so that  a ``big step'' is produced in each iteration. In this sense, if $\Psi \leq \Psi'$, the "$\Psi'$-FVI algorithm" is expected to perform better.
A priori, the sequence of games produced by the two algorithms will diverge, impeding formal comparisons on the number of iterations.
However, for ""monotone"" "potential assigners", we can also compare the number of iterations of the induced "FVI algorithms".

\begin{restatable}{lemma}{lemComparingPotentials}\label{lem:comparison-potentials}
	For every game $\game$, 
	\[\PsiFirst(\game) \leq  \PsiEnp(\game) \leq \PsiDPPI(\game), \text{ and } \PsiGKK(\game) \leq  \PsiEnp(\game).\]
	Moreover, there are games making these inequalities strict. The "potential assigners"	$\PsiFirst$ and $\PsiGKK$ are incomparable.
\end{restatable}

\AP Let $\Psi, \Psi'$ be two "potential assigners". We say that $\Psi'$ is ""monotonically larger"" than $\Psi$, noted $\Psi \intro*\leqMon \Psi'$ if, for all game and "potentials" $\phi \leq \phi'$, it holds 
\[ \phi + \Psi(\game_{\phi}) \leq  \phi' + \Psi'(\game_{\phi'}).  \]
\AP We say that $\Psi$ is ""monotone"" if $\Psi \leqMon \Psi$. The next two lemmas are immediate.

\begin{lemma}
	Let $\Psi, \Psi'$ be "sound", "complete" "potential assigners", and assume $\Psi \leqMon \Psi'$. Then over any input game $\game$, the "$\Psi'$-FVI" algorithm terminates in less iterations than the "$\Psi$-FVI" algorithm.
\end{lemma}

\begin{lemma}
	Let $\Psi, \Psi_1, \Psi_2$ be "potential assigners". It holds:
	\[ \Psi \leqMon \Psi_1 \leq \Psi_2 \; \implies \; \Psi \leqMon \Psi_2. \]
	In particular, if $\Psi$ is "monotone" and $\Psi \leq \Psi'$, then $\Psi \leqMon \Psi'$.
\end{lemma}

\begin{proposition}\label{prop:comparison-algorithms}
	The "potential assigner" $\PsiFirst$ is "monotone". Therefore, "PPI" and "DPPI" terminate in less iterations than "SVI" over any input game.
\end{proposition}
\iffull
\begin{proof}
	Let $\phi \leq \phi'$ be two "potentials" on $\game$. Let $v$ be a Min-vertex (the proof for Max-vertices is the same). It suffices to remark that:
	\[ \phi(v) + \PsiFirst(\game_{\phi})(v) = \min_{vv'\in E} \phi(v') + w(vv') \leq  \max_{vv'\in E} \phi'(v') + w(vv'). \]
\end{proof}
\else\fi

An interesting open question is whether the potential $\PsiGKK$ is monotone.

\section{Experimental results}\label{sec:empirical}

We focus on two distinct game-solving applications: energy game solving, which
is the natural target for our algorithms, and parity game solving, which incurs
a conversion cost to energy games but allows using established parity game
benchmarks and comparison with other parity game solvers.

After explaining the technical aspects of our implementation, and choices of algorithms and benchmarks, we discuss the most remarkable behaviours that can be observed in the experiments.

The algorithms were implemented in Oink~\cite{vanDijk18a}, a tool
providing a uniform framework for the comparison of parity game
solvers.  Our implementation can be obtained at: \href{https://github.com/michaelcadilhac/oink/tree/TACAS25}{\texttt{https://github.com/michaelcadilhac/oink/tree/TACAS25}}.

Experiments were carried on an Intel\textsuperscript{\textregistered}
Core\textsuperscript{\texttrademark} i7-8700 CPU @ 3.20GHz paired with 16GiB of memory, each
test being capped at 60 seconds and 10GiB of memory.
Arithmetic operations over multiple precision integers are carried out using the
GNU Multiple Precision Arithmetic library (GMP).  All games  are available at:
\href{https://github.com/michaelcadilhac/game-benchmarks/tree/TACAS25}{%
    \texttt{https://github.com/michaelcadilhac/game-benchmarks/tree/TACAS25}}.


\paragraph{Set of algorithms.}
We compare our implementations of "PPI", "DPPI" and their "alternating versions"
("PPI-alt" and "DPPI-alt")\footnote{In favour of clarity, we omit the "DPPI-alt"
  plots, as they perform identically to "PPI-alt". This is expected, given the
  similarity of the plots of "PPI" and "DPPI".} to $4$ other algorithms: QDPM
from~\cite{BDM24}, Zielonka's recursive algorithm (ZLK), Tangle learning (TL)
and Recursive tangle learning (RTL) (winner of the latest edition of SYNTCOMP)
from~\cite{vanDijk18b,vanDijk18a}.  Only one of them ("QDPM") can be executed
over general energy games, the other three are parity game solvers


We remark that we do not include comparisons with "SVI"~\cite{BCDGR11}, nor with "GKK-DKZ"~\cite{GKK88,DKZ19}, as these algorithms are known to be inefficient in practice~\cite{BDM24} and incur in frequent timeouts.
Also, we have not compared to an independent implementation of "OSI", as we have not found one such implementation computing winning regions consistent with the rest of the algorithms.

\subsection{Parity game solving}

We show the results of our experiments on parity games in
Figure~\ref{fig:parity-plots}.  We rely on the yearly competition SYNTCOMP24 for
our benchmarks, which has a competition track for parity game solvers, and on
the benchmark suite of Keiren~\cite{keiren15}.  We subdivide the 779 benchmarks
into two categories: synthetic games (crafted by researchers, usually with the
intent of being hard for certain solving approaches) and organic
games (the natural counterpart of the synthetic games).  We note that the
synthetic games include the ``two counter games''
examples~\cite{vD19TwoCounters}, in which TL and RTL show an exponential
behaviour. It also contains the family of examples by
Friedmann~\cite{Friedmann11b}, exponential for "OSI". \iffull (We refer to
  Appendix~\ref{app:friedmann} for more details on Friedmann's family of
  examples.)  \else\fi The organic games are essentially the ones provided by
Keiren~\cite{keiren15}, see therein for their origin.

\begin{figure}[h]
	\begin{center}
		\includegraphics[width=\linewidth]{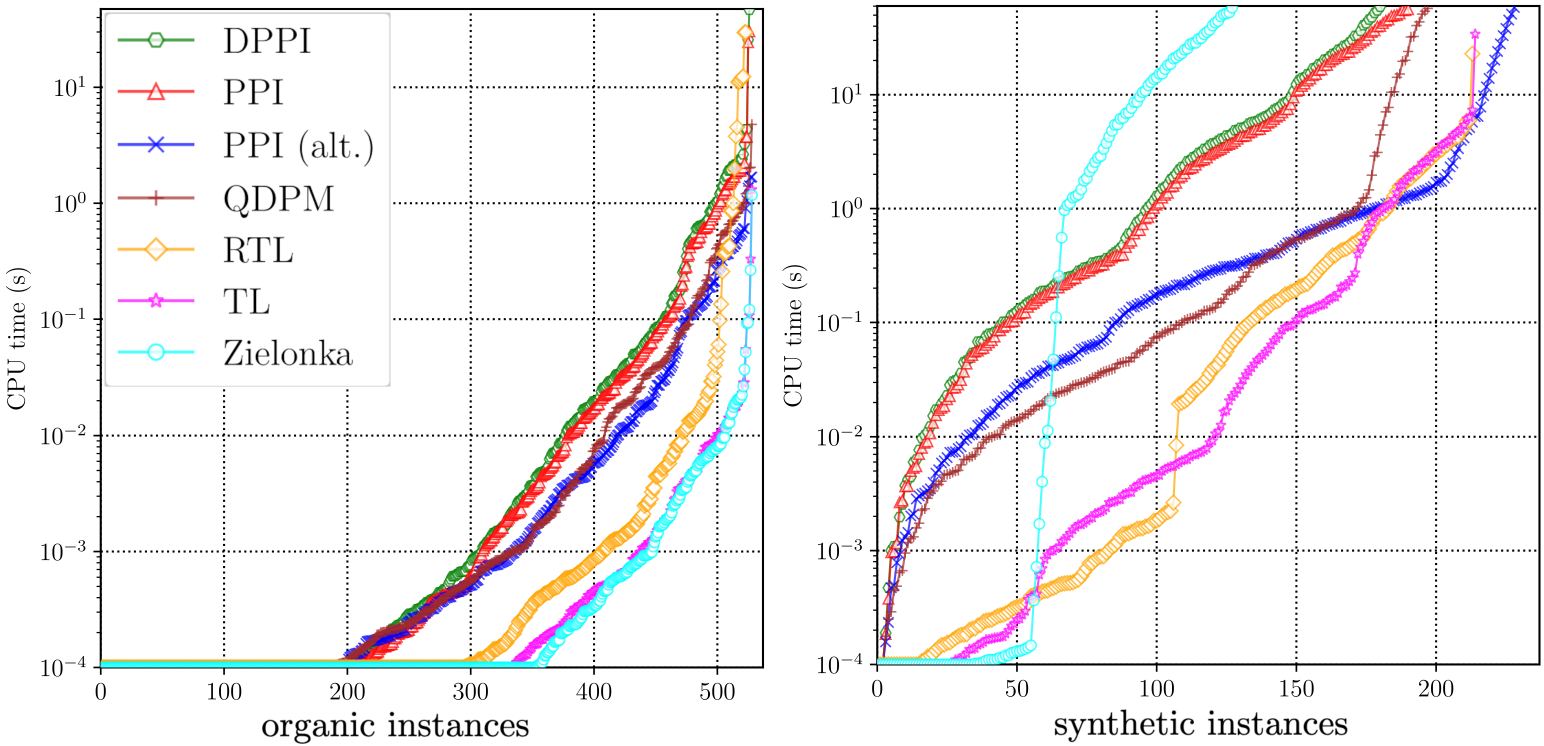}
	\end{center}
	\vspace{-4mm}
	\caption{Survival plot for parity games benchmarks, divided in organic and synthetic.}
	\vspace{-4mm}
	\label{fig:parity-plots}
\end{figure}


As is usual in this settings we present the experimental results as a survival plot, which indicates how many tests
are solved (x-axis) within a time limit (y-axis, time per test). 
In order to solve input parity games with energy games solvers, we
first need to convert the parity game into an energy one.
This step is rather costly, as the priorities of the parity game suffer an exponential blow-up when converted to weights of an energy game. This cost is included in the runtime of our algorithms as well as QDPM.


\subsection{Energy game solving}

We show the results of our experiments on parity games in Figure~\ref{fig:parity-plots}.
We modified Oink so that it would accept negative weights and implemented a
strategy-checker for energy games --- this boils down to checking that, in the
game restricted to the strategy, Max-winning strongly-connected components do
not have infinite negative cycles, and symmetrically for Min.

We consider randomly generated bipartite graphs. The restriction to bipartite
graphs is justified by the fact that, otherwise, the vast majority of vertices are part of winning cycles controlled by the same player, making the game (and its resolution) much easier.
We separate instances that are sparse (the out-degree of each vertex is 2) or dense (the number of edges is \(n^2 / 5\)).

\begin{figure}[h]
	\begin{center}
		\includegraphics[width=\linewidth]{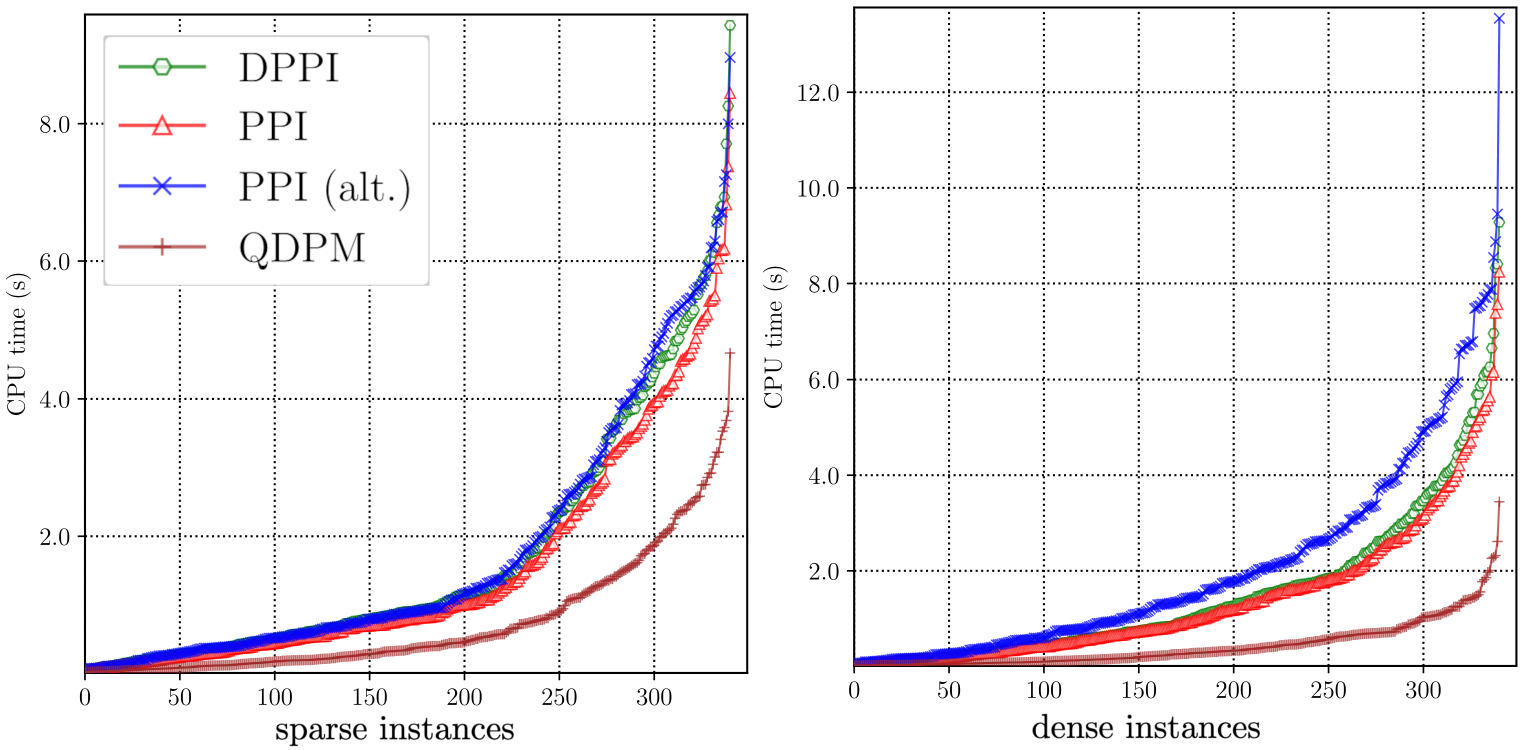}
	\end{center}
	\vspace{-4mm}
	\caption{Survival plot for energy games benchmarks, divided in sparse and dense.}
	\vspace{-4mm}
	\label{fig:energy-plots}
\end{figure}


\subsection{Conclusions of the experiments}
In light of the experiments above, we derive the following conclusions.
\begin{enumerate}
	\item Overall, the "fast value iteration" framework captures several algorithms ("PPI", "PPI-alt", "QDPM") that perform competitively in standard benchmarks of parity games. Despite being less efficient than leading parity game solvers, they are remarkably robust against hard instances, particularly "PPI-alt".
	\item The "alternating version" of "PPI" and "DPPI", for which we were unable to prove termination in theory, always terminate. Moreover, over instances coming from parity games benchmarks, they are significantly faster than their asymmetric counterparts.
	\item While "DPPI" was introduced as a theoretically enhanced version of "PPI", there is no significant difference in the running time of these algorithms. In fact, "DPPI" tends to be slightly slower, due to the increased cost in the computation of the potential.
	\item Although based on the same algorithmic ideas, "QDPM" consistently outperforms "PPI", by almost an order of magnitude.
	This difference can be explained by two factors: (1) "QDPM" uses some smart implementation optimizations~\cite[Sect.~5]{BDM24}, and 
	(2) our implementation of "PPI" is tailored for (usual) edge-weighted games, whereas "QDPM" is implemented for vertex-labelled game (for which two weights outgoing a given vertex are always equal).
	\iffull
	Details and estimates on why and how this difference may affect the performance of the algorithms is discussed in detail in Appendix~\ref{app:implementationDifferences}.
	\else\fi
	
\end{enumerate}

\section{Conclusion and future work}
We have presented a general framework to describe algorithms for energy games, capturing and providing simple descriptions and correctness proofs for many of them, including the top performing ones in practice.
The "fast value iteration" framework raises numerous exciting questions; we outline some of them here.

\paragraph{New algorithms.} The new framework provides a very easy way to propose new correct algorithms: it suffices to define a "potential assigner" which is "sound", "complete", and computable in polynomial time.
We have isolated $\Psi_{\Enp}$ as a important potential assigner, implicitly used by the two fastest algorithms solving energy games, and presented the potential $\Psi_\DPPI$ which, while still being computable in polynomial time, is $\geq \Psi_{\Enp}$ in general.

\begin{question}\label{quest:newAssigners}
	Does there exist a reasonable\footnote{A non-reasonable example meeting the requirements is $\Psi_{\DPPI} \circ \Psi_\DPPI$.} potential assigner which is "sound", "complete", computable in polynomial time and $\geq \Psi_\DPPI$?
\end{question}


\vspace{-2mm}
\paragraph{Alternating algorithms.} Our framework also allows to design symmetric "alternating algorithms", for which we are unable to prove termination using the currently available tools.
Our empirical study shows that, in practice, these not only terminate, but are often considerably faster than their asymmetric counterparts.

\begin{question}
	Do "alternating fast value iterations" terminate over "simple games"?
\end{question}

We stress the fact that the question is open for all "sound" and "complete" "potential assigners" (except for "GKK", for which the "alternating algorithm" coincides with the normal one, see Section~\ref{subsec:GKK-DKZ}).

\paragraph{Lower bounds.} 
Friedmann proposed notoriously involved constructions which provide exponentially many iterations for strategy improvement algorithms in~\cite{Friedmann11b}. \iffull
(We discuss in detail Friedmann's family of examples in Appendix~\ref{app:friedmann}.)
\else\fi
Although  these include "OSI" (see~\cite[Sect.~4.6.2]{Friedmann11b}), our experiments show that "PPI" can solve these instances in linear time, and "PPI-alt" in a constant number of $2$ iterations.
Currently, we lack any family of examples in which "PPI" takes more than a linear number of iterations, although we expect that it should admit exponential lower bounds.

\begin{question}
Can one design superpolynomial lower bounds on the number of iterations for "PPI"? And (more challenging) for its "alternating variant"?
\end{question}

\vspace{-2mm}
\paragraph{Randomized initialization.} As remarked in Section~\ref{subsec:potential-reductions}, the weights of the cycles of $\game$ and $\Gphi{\phi}$ coincide for any finite "potential" $\phi$, so the threshold problem for the mean-payoff objective is equivalent over these games.
Therefore, we can initialize a given game with an arbitrarily potential $\phi$, and solve the ``perturbed game''. This directly provides a randomized version of any algorithm: add a random perturbation before execution.
This idea is not novel, it was studied empirically by Beffara and Vorobyov~\cite{BV01} for the "GKK algorithm"; and lower bounds were later derived by Lebedev~\cite{Lebedev16} for the same algorithm.

\begin{question}
	Is the randomized variant of "PPI" subexponential? More generally, can we design a potential assigner whose associated randomized fast value iteration is subexponential?
\end{question}

\vspace{-2mm}

\paragraph*{Smooth analysis.}
An interesting parallel can be drawn with smooth analysis~\cite{ST04SmoothAnalysis}, which consider small perturbations of the input (randomized initialization is difference since we get an equivalent input). In fact, it was recently established that there is a strategy improvement algorithm for mean-payoff games that is polynomial in the sense of smooth analysis~\cite{LS24SmoothedAnalysisMP}.

\begin{question}
	Can the algorithm of~\cite{LS24SmoothedAnalysisMP} be recast as a fast value iteration?
\end{question}

\paragraph*{Acknowledgments.}  We thank the authors of \cite{BDM24} for
kindly providing their implementation of QDPM and Alexander Kozachinskyi for pointing out to us several important references.\\
Antonio Casares is supported by the Polish National Science Centre (NCN) grant ``Polynomial finite state computation'' (2022/46/A/ST6/00072).

 \bibliographystyle{splncs04}
 \bibliography{bib}
 
 \iffull
 \newpage
 \appendix
\section{Correctness of algorithms}
	 \subsection{Correctness of the fast value iteration meta-algorithm}\label{app:metaAlg-correctness}
	 First, observe that for a finite path $\pi= v_0 \re v_1 \re \dots \re v_k$ which visits only vertices with finite potential, its sum in $\game_\phi$ is given by
\[
\summ_\phi(\pi) = \summ(\pi) - \phi(v_0) + \phi(v_k).
\]
We start with a technical lemma.

\begin{lemma}\label{lem:optimal_strat}
	Let $\sigma_0$ be an $\En$-optimal Min strategy in $\game$ and $\pi = v_0 \re v_1 \re \dots \re v_k$ be a finite path consistent with $\sigma_0$ such that $\EnG(v_k)<\infty$.
	Then we have $
	\summ(\pi) \leq \EnG(v_0) - \EnG(v_k)
	$.
\end{lemma}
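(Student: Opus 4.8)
The plan is to leverage that $\sigma_0$ is $\En$-optimal \emph{uniformly}: by Theorem~\ref{thm:positionality}, for every vertex $v$ one has $\sup\{\En(w(\rho)) : \rho \in \Pi_v^\omega,\ \rho \const \sigma_0\} = \En_0(v)$. The idea is to glue onto the given finite path $\pi$ an infinite continuation from $v_k$ that is also consistent with $\sigma_0$ and that realises an energy from $v_k$ arbitrarily close to $\En_0(v_k)$; reading the definition of $\En$ on the resulting infinite play from $v_0$ then forces the desired inequality.

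Concretely, if $\En_0(v_0) = \infty$ there is nothing to prove, so assume it is finite. Fix $\varepsilon > 0$. Since $\En_0(v_k) < \infty$ and $\sigma_0$ is $\En$-optimal, there is an infinite path $\pi' \in \Pi_{v_k}^\omega$ with $\pi' \const \sigma_0$ and $\En(w(\pi')) \geq \En_0(v_k) - \varepsilon$. As $\pi$ ends in $v_k$ and $\pi'$ starts in $v_k$, the concatenation $\rho = \pi\pi'$ is a well-defined infinite path from $v_0$; and since consistency with a positional strategy is a purely local condition at Min vertices, $\rho \const \sigma_0$.

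It remains to compute. For every $\ell \geq 0$, the prefix $\rho_{<k+\ell}$ is $\pi$ followed by $\pi'_{<\ell}$, so $\summ(\rho_{<k+\ell}) = \summ(\pi) + \summ(\pi'_{<\ell})$. Taking the supremum over $\ell \geq 0$ and recalling that $\En(w(\pi')) = \sup_{\ell \geq 0} \summ(\pi'_{<\ell})$ (the sup includes the empty prefix, so in particular $\En(w(\pi')) \geq 0$), we get $\En(w(\rho)) \geq \summ(\pi) + \En(w(\pi'))$. On the other hand $\rho$ is consistent with $\sigma_0$ and starts in $v_0$, hence $\En(w(\rho)) \leq \En_0(v_0)$. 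Combining,
\[
\summ(\pi) \;\leq\; \En_0(v_0) - \En(w(\pi')) \;\leq\; \En_0(v_0) - \En_0(v_k) + \varepsilon,
\]
and letting $\varepsilon \to 0$ yields the claim.

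The argument is short; the only points needing care are bookkeeping — dismissing the degenerate case $\En_0(v_0) = \infty$ first, and keeping the empty prefix in the supremum defining $\En$ so that the gluing inequality is valid — and the one genuinely load-bearing point: that a \emph{single} strategy $\sigma_0$ may simultaneously be the strategy $\pi$ is consistent with and a strategy nearly attaining $\En_0(v_k)$ from $v_k$. This is exactly the uniform positional $\En$-optimality granted by Theorem~\ref{thm:positionality}, so no real obstacle remains. (One could alternatively invoke positional determinacy of the one-player energy game reachable from $v_k$ under $\sigma_0$ to attain the supremum exactly and avoid $\varepsilon$, but the $\varepsilon$-argument is cleaner and needs less.)
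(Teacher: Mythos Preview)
Your proof is correct and follows essentially the same route as the paper's: concatenate $\pi$ with an infinite $\sigma_0$-consistent continuation $\pi'$ from $v_k$, bound $\En(w(\pi\pi'))$ below by $\summ(\pi)+\En(w(\pi'))$ and above by $\En_0(v_0)$, and rearrange. The only cosmetic difference is that the paper takes $\pi'$ attaining $\En_0(v_k)$ exactly (which is legitimate, since playing an optimal Max strategy against $\sigma_0$ realises the value), whereas you use an $\varepsilon$-approximation and pass to the limit.
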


\begin{proof}
	Let $\pi'$ be an infinite path from $v_k$ consistent with $\sigma_0$ and such that $\EnG(v_k)=\En(w(\pi'))$.
	Then $\pi \pi'$ is consistent with $\sigma_0$ thus $\EnG(v_0) \geq \En(w(\pi \pi'))$ by optimality.
	We then obtain 
	\[\arraycolsep=1.4pt\def\arraystretch{1.3}
	\begin{array}{lcl}
		\EnG(v_0) \ \ \geq \ \ \En(w(\pi\pi')) & = & \sup_{k' \geq 0}(\summ((\pi\pi')_{<k'})) \\ & \geq & \sup_{k' \geq k} (\summ((\pi \pi')_{<k'})) \\
		& = & \summ(\pi) + \sup_{k' \geq 0} \summ(\pi'_{<k'}) \\
		& = & \summ(\pi) + \En(w(\pi')) \ \  =  \ \ \summ(\pi) + \EnG(v_k). \ \ 
	\end{array}
	\]
	
\end{proof}

We are now ready to prove Theorem~\ref{thm:potential_reduction_theorem}, which we first restate for convenience.

\potentialReduction*

\begin{proof}
	Let $\phi: V \to \Ninfty$ be a potential such that $\phi \leq \EnG$; we aim to prove that $\EnG = \phi + \En_{\game_\phi}$ over $V$.
	Consider first a vertex $v$ with $\EnG(v)=\infty$, fix an optimal Max strategy $\tau_0$ in $\game$ and an infinite path $\pi=e_0e_1 \dots = v_0 \re v_1 \re \dots$ consistent with $\tau_0$ from $v$: by definition we have $\En(w(\pi)) = \sup_k \sum_{i=0}^{k-1} w(e_i) = \infty$.
	We claim that $\En(w_\phi(\pi))=\infty$ which proves the wanted equality over $v$ (both terms are infinite).
	\begin{itemize}
		\item If for some $i$, $w(e_i) = \infty$ then $w_\phi(e_i) = \infty$ which implies the result.
		\item If for some $i$, $\phi(v_i)=\infty$ then again we have $w_\phi(e_i)=\infty$.
		\item Otherwise, we have for all $k$
		\[
		\summ_\phi(\pi_{<k}) = \underbrace{\phi(v_k) - \phi(v_0)}_{\text{bounded}} + \summ(\pi_{<k}),
		\]
		and therefore $\sup_k \summ_\phi(\pi_{<k}) = \sup_k \summ(\pi_{<k}) = \infty$, the wanted result.
	\end{itemize}
	
	We now consider a vertex $v$ such that $\EnG(v) < \infty$.
	Consider an $\En$-optimal Min strategy $\sigma_0: \VMin \to E$ in $\game$ and let $\pi=v_0 \re v_1 \re \dots$ be an infinite path consistent with $\sigma_0$ starting from $v_0=v$.
	Note that for any $k \geq 0$, $v_k$ has finite energy value, and thus we obtain thanks to Lemma~\ref{lem:optimal_strat} and the hypothesis $\phi \leq \EnG$ that
	\[\arraycolsep=1.4pt\def\arraystretch{1.3}
	\begin{array}{lcl}
		\summ_{\phi}(\pi_{<k}) & = & \summ(\pi_{<k}) + \phi(v_k) - \phi(v_0)\\
		& \leq & \EnG(v_0)  \underbrace{- \EnG(v_k) + \phi(v_k)}_{\leq 0} - \phi(v_0) \ \ \leq \ \ \EnG(v_0) - \phi(v_0),
	\end{array}
	\]
	hence $\En_{\phi}(v_0) = \sup_{\pi \const \sigma_0} \sup_{k \geq 0} \summ_\phi(\pi_{<k}) \leq \EnG(v_0) - \phi(v_0)$.
	
	For the other inequality, consider an optimal Min strategy $\sigma_\phi$ in $\game_\phi$, and let $\pi$ be an infinite path from $v_0=v$ consistent with $\sigma_\phi$.
	By applying Lemma~\ref{lem:optimal_strat} in $\game_\phi$ we now get
	\[\arraycolsep=1.4pt\def\arraystretch{1.3}
	\begin{array}{lcl}
		\summ(\pi_{<k}) & = & \summ_\phi(\pi_{<k}) - \phi(v_k) + \phi(v_0)\\
		& \leq & \En_{\game_\phi}(v_0) - \underbrace{\En_{\game_\phi}(v_k)}_{\geq 0} - \underbrace{\phi(v_k)}_{\geq 0} + \phi(v_0) \ \ \leq \ \ \En_{\game_\phi}(v_0) + \phi(v_0),
	\end{array}
	\] 
	and the wanted result follows by taking a supremum.
\end{proof}

When considering a map $\phi:S \to \Ninfty$ defined over a subset $S$ of the vertices, we let $\overline \phi$ denote the extension of $\phi$ to $V$ defined by setting its value to be $\infty$ on $\compl S$.
We require an additional technical lemma.

\begin{lemma}\label{lem:remove-attr}
	Let $\game$ be a game, let $S$ be a set of vertices with energy value $\infty$ in $\game$, let $A=\Attr_\game^\Max(S)$ and let $\game'=\game \setminus A$.
	Then $\EnG(A) =\infty$ and $\overline{\En_{\game'}} \leq \EnG$.
\end{lemma}

In fact, the last inequality is even an equality, but we only require this direction.

\begin{proof}
	By a strategy forcing to first go to $S$, then following an $\En$-optimal strategy, Max ensures energy-value $\infty$ over $A$.
	In particular, we get that $\overline{\En_{\game'}}$ and $\EnG$ both have value $\infty$ over $A$.
	Now since $A$ is a Max-attractor, any Min strategy over $\compl A$ forces the game to stay in $\compl A$ therefore $\En_{\game'} \leq \EnG|_{\compl A}$, which concludes the proof.
\end{proof}

We are now ready to prove Theorem~\ref{thm:potential-reduction-algorithm}.
The proof requires a bit of bookkeeping regarding vertices sent to $\infty$.

\metaAlg*

\begin{proof}
	Let $\game_0=\game$, $A_0=\emptyset$, $\Phi_0=0$ and for each iteration $i=1,2,\dots$ let $\game_i,\Phi_i$ and $\psi_i$ be the values of the corresponding variables after line~\ref{line:update-cumulative}, and let $A_i$ be computed on line~\ref{line:remove-attractor}.
	Let $V_i$ be the vertex set of $\game_i$ so that $\compl{V_i}=A_0 \cup \dots \cup A_{i-1}$.
	By definition, we have 
	\[
			\phi_i=\Psi(\game{i}), \quad
			\Phi_i=\Phi_{i-1} + \overline{\phi_i}, \quad \text{and} \quad
			\game_i=(\game_{i-1})_{\phi_{i-1}} \setminus A_{i-1}.
	\]
	We prove by induction on $i$ that for all $i$, $\Phi_{i}$ is "sound" for $\game$ and that $\EnG(A_i) \subseteq \{\infty\}$.
	For $i=0$ there is nothing to prove, so we let $i > 0$ and assume the result known for $j< i$.
	
	Let $\overline {\game_i}$ denote $\game_{\Psi_{i-1}}$.
	Then $\game_i = \overline{\game_i} \setminus \Attr_\game^\Max(\compl{V_i})$ and we know by induction that $\En_G$ is $\infty$ over $\compl V = A_0 \cup \dots \cup A_{i-1}$.
	Thus Lemma~\ref{lem:remove-attr} gives $\overline{\En_{\game_i}} \leq \En_{\overline \game_i}$.
	Now since $\Psi$ is "sound" we have $\phi_i \leq \En_{\game_i}$ which implies $\overline{\phi_i} \leq \overline{\En_{\game_i}} = \En_{\overline{\game_i}}$, and thus $\overline \phi_i$ is "sound" for $\overline{\game_i}$.
	Thanks to the induction hypothesis and compositionality we deduce that $\Phi_i = \Phi_{i-1} + \phi_i$ is "sound" for $\game$.
	
	There remains to prove that $\EnG(A_i) \subseteq \{\infty\}$.
	Let $v \in \Psi_i^{-1}([(n-1)W+1,\infty])$.
	Then $\EnG(v) \geq \Psi_i(v)$ therefore by Corollary~\ref{cor:mean_payoffs_and_energies} we get $\EnG(v)=\infty$, and conclude thanks to Lemma~\ref{line:remove-attractor}.
\end{proof}

	 \subsection{A sufficient condition for "$\infty$-attraction" }\label{app:infAttracting}
	 
   Recall that $N_\game$ is the set of vertices such that Min can force to immediately see a negative edge.
We say that a potential assigner is $N$-null if for any $\game$, $\Psi(\game)(N_\game) = 0$, that it is $\Psi$ is $N$-shrinking if for all $\game$ it satisfies $N_{\Psi(\game)} \subseteq N_\game$, and that it is path-based if finite values of $\Psi(\game)(v)$ are upper bounded by the weight of a simple path from $v$ to a vertex in $N_\game$.

\begin{theorem}\label{thm:infty-attr}
	Any potential assigner which is path-based and $N$-shrinking is "$\infty$-attracting".
\end{theorem}

\begin{proof}
	Consider such a potential assigner $\Psi$ and a game $\game_0$.
	Forall $i$, let $\phi_i=\Psi(\game_i)$, $\game_{i+1}=(\game_i)_{\phi_i}$ and $\Phi_i=\phi_0 + \dots + \phi_{i-1}$.

	Since $\Psi$ is $N$-shrinking, we get (with obvious notations) $N_0 \supseteq N_1 \supseteq \dots$ and therefore since it is moreover $N$-null, vertices $v'$ in $N_i$ satisfy $\Phi_{i}(v')=0$.
	Now if $v$ is a vertex such that $\phi_i(v)$ is finite, then since $\Psi$ is path based, there is a simple path $\pi=v_0 \re \dots \re v_k=v' \in N_i$ in $\game_0$ from $v$ whose $\Phi_j$-modified sum satisfies $\summ_{\Phi_j}(\pi) \geq \phi_j(v)$.
This rewrites as
\[
	\underbrace{\phi_j(v)}_{\geq 0} \leq - \Phi_j(v) + \underbrace{\Phi_j(v')}_0 + \underbrace{\sum_{i=0}^{k-1} w_{\Phi_{j}}(v_iv_{i+1})}_{\leq (n-1)W} ,
\]
and thus $\Phi_j(v) \leq (n-1)W$.
Stated differently, finite values remain $\leq (n-1)W$, which guarantees termination in at most $O(n^2W)$ iterations.
\end{proof}

There remains to see that $\Psi_{\Enp}$, $\Psi_{\DPPI}$ and $\PsiGKK$ satisfy the hypotheses of Theorem~\ref{thm:infty-attr}.
It is obvious that they are $N$-null and path-based.

\begin{lemma}\label{lemma:N-decreases}
	The potential assigners $\Psi_{\Enp}$, $\Psi_{\DPPI}$ and $\PsiGKK$ are $N$-shrinking.
\end{lemma}

\begin{proof}
	Let $v \notin N_\game$.
	Then observe that for all $X \in \{\Enp, \DPPI, \GKK\}$, it holds that if $v \in \VMax$ (resp. $v \in \VMin$) then for some (all) successors $v'$ it holds that $ \Psi_X(v) \geq w(vv') + \Psi_X(v)$.
	This is the same as saying that Max can ensure that a non-negative weight is immediately seen from $v$ in the $\Psi_X$-modified game, that is $v \notin N_{\game_{\Psi_X}}$.
\end{proof}

   \subsection{Correctness and complexity of Algorithms~\ref{alg:PPI_OSI} and~\ref{alg:PPI_QD}}\label{app:correctnessPPI}
   
We now prove Theorem~\ref{thm:correctnessPPI}.

\correctnessPPI*

\begin{proof}
We prove correctness of both algorithms using a similar induction, stating that $\Phi$ coincides with $\Enp$ over $\compl{F}$.
In both cases this is true when the while loop starts, since $\Enp$ is zero over $N_\game$, so we focus on the inductions step.
The two proofs below very are based on similar ideas, we separate them for clarity.

\vskip1em

\emph{Induction step for Algorithm~\ref{alg:PPI_OSI}.} There are two cases.
\begin{itemize}
	\item If $\finEscMax \neq \emptyset$. Let $v \in \finEscMax$ be chosen by the algorithm.
	Since $\escF(v)<\infty$, all positive edges outgoing from $v$ lead to $\compl{F}$.
	Now an optimal $\Enp$ strategy from $v$ should surely start with a positive edge, say, going to $v' \in \compl F$.
	This concludes, since by induction, $\Enp$ and $\Phi$ coincide over $\compl{F}$.
	\item If $\finEscMax = \emptyset$. Then let $v \in \finEscMin$ be chosen by the algorithm, and $v'$ be such that $m=w(vv')+\Phi(v')$.
	We claim that the edge $vv'$ is $\Enp$-optimal, which proves the wanted result by induction.
	Indeed, an other edge $vv''$ that ends in $\compl{F}$ would lead to value $w(vv'') + \Phi(v'') \geq m$ by minimality.
	Now if Min plays an edge towards $F$, Max can force the game to remain in $F$ while visiting only non-negative edges (since $\finEscMax =\emptyset$).
	Therefore such a play remains in $F$ until potentially going to $\compl F$ via a Min vertex, and thus its value is $\geq m$ by induction. 
\end{itemize}

\emph{Induction step for Algorithm~\ref{alg:PPI_QD}.} Let $v$ be the vertex chosen by the algorithm, meaning $\escF(v)$ is minimal, and in particular it is finite.
Let $\game'$ be the modified game at this stage of the algorithm, note that $\game'=\game_{\Phi}$.
\begin{itemize}
	\item If $v \in \VMax$.
	Since $\escF(v)<\infty$, all positive edges in $\game'$ outgoing from $v$ lead to $\compl{F}$.
	But since $\Phi(v)=0$, positive edges outgoing from $v$ in $\game'$ are also positive in $\game$.
	
	Now an optimal $\Enp$ strategy from $v$ in $\game$ should surely start with a positive edge.
	The modified weights in $\game'$ of edges from $v$ to $v' \in \compl{F}$ are of the form $w(vv') + \Phi(v')=w(vv')+\Enp(v')$ by induction.
	We conclude that the edge $vv'$ maximising $\escF(v)$ satisfies $\escF(vv') = \Enp(v)$ which is also the final value of $\Phi(v)$.
	
	\item If $v \in \VMin$.
	Then let $v'$ be such that $\escF(v)=w_\Phi(vv')$.
	We claim that the edge $vv'$ is $\Enp$-optimal, which proves the wanted result by induction; starting with edge $vv'$ then playing optimally gives value $w(vv')+\Enp(v')$, which correspond to $w_\Phi(vv')=\escF(v)$ by induction. 
	First, an other edge $vv''$ that ends in $\compl{F}$ would lead to value $\escF(vv'')$ by the same argument, which is $\geq \escF(vv')$ (hence, less optimal for Min) by minimality.
	Now if Min plays an edge towards $F$, Max can force the game to either remain in $F$ while visiting only non-negative edges in $\game'$ (since $\finEscMax =\emptyset$), or leave towards $\compl{F}$ via an edge $uu'$ with modified weight $\escF(uu') \geq \escF(vv')$.
	But since $\Phi$ is $0$ over $F$, non-negative edges in $\game'$ are also non-negative in $\game$, which concludes.
\end{itemize}

In both cases, the while loop terminates when $\finEsc =\emptyset$.
This means that Max can ensure that plays starting in $F$ visit only $\geq 0$ vertices.
But since $\game$ is simple, this implies that $\Enp$ is indeed $\infty$ over $F$, and thus it coincides with $\Phi$ everywhere.

Updating values of $\finEsc$ requires, as is standard in such game algorithms, storing the number of positive edges, from each Max vertices towards $F$, and updating predecessors of vertices added to $F$.
This incurs a runtime of $O(m)$.
For algorithm~\ref{alg:PPI_QD}, updating minimal value of $\escF$ requires using a priority queue, and the same technique can be applied in algorithm~\ref{alg:PPI_OSI} to maintain the value of $m$.
This induces a runtime of $O(n\log n)$, just like in Dijkstra's algorithm~\cite{FT84}.
\end{proof}
	 \subsection{Soundness and completeness of $\PsiDPPI$}\label{app:correctnessDPPI}
	 \correctnessDPPI*
\begin{proof}
	Fix a "simple game" $\game$.
	We start proving "soundness" ($\PsiDPPI(\game) \leq \EnG$).
	Let $\game_j,F_j,v_j$ and $\phi_j$ be, respectively, the values of $\game,F,v$ and $\phi$ after line~\ref{line:endIf} at the $j$th iteration of the algorithm, and let $\Phi_j=\phi_1 + \dots + \phi_{j-1}$, so that $\game_j= \game_{\Phi_j}$.
	Thanks to compositionality (Corollary~\ref{cor:compositionality}), it suffices to prove that $\phi_j$ is "sound" in $\game_j$, so we should prove that $\phi_j(v_j)=\escFj(v) \leq \EnGame{\game_j}(v_j)$.
	
	Note that $\Phi_j$ is $0$ over $F_j$ so edges outgoing from vertices in $F_j$ have a weight in $\game_j$ greater or equal to their weight in $\game$.
	In particular, $F_j \subseteq V\setminus{\negG{\game_j}}$, hence Max can ensure that only edges with non-negative weights are seen over $F_j$.
	Consider the following Max-strategy defined over $F_j$: if there is a non-negative edge towards $F_j$ from the current vertex $v$, play it; otherwise play an edge maximising $\escF(v)$.
	We claim that this strategy achieves $\En$-value $\geq \escF(v)$ in $\game_j$.
	Consider a play $\pi$ from $v_j$ consistent with the strategy; there are two cases.
	\begin{itemize}
		\item If $\pi$ remains in $F_j$, then only non-negative weights are seen, and therefore since the game is simple, the value of the play is $\infty \geq \escFj(v)$.
		\item Otherwise, $\pi$ visits only non-negative edges within $F_j$ until following an edge $F_j \ni v \re v' \notin {F_j}$.
		Then the weight of this edge in $\game_j$ is $\geq \escFj(v)$, which concludes.
	\end{itemize}
	
	Finally, note that after the execution of the \texttt{while}-loop (line~\ref{line:afterWhile-DPPI}) $F = V\setminus \Attr_{\game}(\negG{\game})$. Therefore, $\EnG(v) = \infty$ for all those vertices.
	
	We prove "completeness", that is $\EnG \neq 0 \implies \PsiDPPI(\game)>0$. If $\EnG>0$, then there is some vertex $v$ from which "Max can immediately see" a positive weight;
	note that $\escF(v)>0$ for all $F\subseteq V$ containing this vertex.	
	Let $v_j$ correspond to the first such vertex encountered by the algorithm. Note that $v_j \notin \negG{\game}$  and since $\Phi_j=0$ we have $\escFgame{\game_j}(v_j)=\escFgame{\game}(v_j)>0$.
	Therefore $\phi_j(v_j)>0$ hence $\Psi(\game)>0$.
\end{proof}

 \section{Comparisons between related algorithms}\label{app:comparisons}
 \subsection{Comparison of potential assigners}\label{app-sec:comparingPotentials}

\lemComparingPotentials*
\begin{proof}
	We focus on the proof of the inequalities, and discuss below examples separating the different potentials.
	
	($\PsiFirst(\game)\leq \PsiEnp(\game)$.) Follows directly from the fact that $\First \leq \Enp$ over sequences of weights.	
	
	($\PsiGKK(\game)\leq \PsiEnp(\game)$.) By definition of $\PsiGKK$, we have $\PsiGKK(\game) \leq \wplus \leq \EnpG$.
	
	($\PsiEnp(\game)\leq \PsiDPPI(\game)$.)
	Let $\Phi = \PsiDPPI(\game)$.
	Let $v$ be a vertex assigned potential $\Phi(v) = x <\infty$ in the $j$th iteration of Algorithm~\ref{alg:DPPI} (the property trivially holds for $v$ with $\Phi(v) = \infty$). Assume by induction that $\Enp(v') \leq \Phi(v')$ for all $v'$ that have been treated previously.
	Let $v'\notin F$ such that $vv'\in E$ is the transition determining $\Phi(v)$, that is, $\Phi(v) = w(vv') + \Phi(v')$. We distinguish two cases according to the player controlling $v$. If $v\in \VMin$, then:
	\[ \Enp(v) \leq w(vv') + \Enp(v') \leq  w(vv') + \phi(v') = \phi(v),\]
	where the second inequality follows by induction hypothesis.
	
	If $v\in \VMax$, let $u$ be a successor of $v$ such that $\Enp(v) = w(vu) + \Enp(u)$. Note that $u\notin F$, as otherwise $\phi(v) \geq \escF (vu) = \infty$. Therefore:
	\[ \Enp(v) = w(vu) + \Enp(u) \leq  w(vu) + \phi(u)  \leq w(vv') + \phi(v') = \phi(v),\]
	where the second inequality follows by induction, and the third one because $vv'$ is the edge maximizing the escape weight from $F$ in $\game$.
\end{proof}

A game separating PPI and DPPI was given in Figure~\ref{fig:exampleDPPI}.


\begin{example}[$\PsiFirst < \PsiGKK$]
	Consider the game with a single vertex $v$ and two self loops, with weights $1$ and $W$. We have that $\PsiFirst(\game)(v) = 1$, and "SVI" takes $W$ iterations to realize that $\EnG(v) = \infty$. However, $\PsiGKK(\game)(v) = \PsiEnp(\game)(v) \PsiDPPI(\game)(v) = \infty$; all the other algorithms terminate in a single iteration.
\end{example}

\begin{example}[$\PsiGKK < \PsiFirst$ is slow]
	We note that for all game $\game$, the image of $\PsiGKK(\game)$ contains at most two value: $0$ and $\wplus$ (or $\wminus$).
	Consider the game with three vertices controlled by Max $V = \VMax = \{v_1,v_2,v_N\}$, and edges given by: $v_N \rew{-1} v_N$, $v_1 \rew{1} v_N$, $v_2 \rew{2} v_N$.
	Then, we have $\wplus = \wminus = 1$, and  $\PsiGKK{\game}(v_1) = \PsiGKK{\game}(v_2) =1$. The "GKK" algorithm takes 2 iterations to solve this game.
	However, $\PsiFirst{\game}(v_2) = 2$, and "SVI" takes a single iteration to solve the game.
\end{example}

\subsection{Comparison between PPI and OSI}\label{app-sec:PPI-vs-OSI}

We now describe the algorithm OSI, explaining the similarities and differences with PPI, our presentation within the fast value iteration framework.\footnote{Note that OSI was originally presented exclusively over parity games. It can easily be generalized to energy games, in the following we will always refer to this straightforward generalization.}\textsuperscript{,}\footnote{We note that Player 0 in~\cite{Schewe08,Luttenberger08} corresponds to our player $\Max$.}

OSI relies on the notion of \emph{estimations}, which correspond to our potentials.
To update an estimation (basic update in~\cite[p.377]{Schewe08}), OSI uses the auxiliary \emph{update game} $\mathcal{E}_\phi$, obtained from $\game_{\phi}$ by: i) adding a sink state $\bot$ to which $\Max$ can retreat at any point, ii) $\Max$-choices are restricted to non-negative edges. In this game, $\Max$ tries to maximize the weight of a play before reaching $\bot$. That value almost coincides with the $\Enp$-values of $\game_{\phi}$, and the subroutine used to solve the update game is very similar to the Algorithm~\ref{alg:PPI_OSI}; the main difference is that in $\Enp$ we stop the game as soon as one of the players produces a negative edge.
Due to this difference, some extra technical steps are required in the presentation of OSI:
\begin{itemize}
	\item The presentation of the algorithm is restricted to bipartite graphs.
	\item An initialisation step in which a first potential $\phi_0\colon V \to \N$ is computed is required. In the case of a bipartite graph, these are just the $\First$-values of $\Min$-vertices.
	\item Before each basic update state, we need to ensure that $\Min$ will not have the opportunity to visit negative edges in the update game. To this end, the current potential $\phi$ needs to be decreased in some $\Max$-positions (point 2 at the bottom of~\cite[p.379]{Schewe08}).
\end{itemize}

It is worth mentioning that, soon after the introduction of Schewe's algorithm, Luttenberger~\cite{Luttenberger08} proposed a reformulation as an explicit switching policy in the strategy improvement framework.
Although a potential is still used to guide the updates of the strategies, it comes organically as the evaluation of the current strategies.
To compute this evaluation, Luttenberger uses an adaptation of the Bellman-Ford algorithm, which is less efficient than Dijkstra's.


For the reasons stated above, we see PPI as a polished and streamlined version of Schewe's algorithm.
In particular, PPI avoids the introduction of an additional sink vertex, answering a question by Bj\"orklund and Vorobiov~\cite[Conclusion]{BV05}.
The discrepancies on the running time (see Section~\ref{sec:empirical}) can be explained by (1) the extra computation steps that appear in the original description of OSI, and (2) the initialization to a slightly different potential in OSI's first step.

\subsection{Comparison between QDPM and PPI}\label{app-sec:QDPM-vs-FVIQD}
We now describe the algorithm "QDPM", explaining the similarities and differences with "PPI", our presentation within the "fast value iteration" framework.

The presentation of QDPM from~\cite{BDM24} is based on the notion of quasi dominions. 
A subset of positions $Q\subseteq V$ is a quasi dominion if player $\Max$ has a strategy ensuring to visit only non-negative weight as long as the play does not exit $Q$.
Therefore, player $\Min$ has an incentive to leave such a region as soon as possible.
We observe that the set $F = \compl{\negG{\game}}$ from which $\Min$ cannot force to immediately see a negative edge is a quasi dominion in the game $\game$.
The algorithm "PPI" finds a strategy for $\Min$ to leave this quasi dominion minimising the energy.

The main iteration principle of "QDPM" is provided by the operator $\mathsf{prg}_+$ (\cite[Alg.~1]{BDM24}). This almost corresponds to Algorithm~\ref{alg:PPI_QD} in our presentation. That is why we consider that both algorithms use the same underlying mechanism.
However, there are some differences between "QDPM" and "PPI" that may lead to different executions over the same game:
\begin{itemize}
	\item "QDPM" does not apply potential updates modifying the game. Instead, it carries the information in a potential $\mu$ (progress measure in the terminology of~\cite{BDM24}), which is updated in each iteration. The information carried by the potential $\mu$ is used in the other iterations by the algorithm. (By iteself, this does not provoke differences in executions.)
	
	\item QDPM does not initialise the quasi dominion $F$ to $\compl{\negG{\game}}$. Instead, $F$ is the set of positions which are assigned value $>0$ by the potential $\mu$ coming from previous iterations ($F = \overline{\mu^{-1}(0)}$). 
	In order to enlarge this set, a first small update (corresponding to a potential update of $\PsiFirst$) is applied to $F$, this corresponds to $\mathsf{prg}_0(\overline{\mu^{-1}(0)})$ in~\cite[p.7]{BDM24}).
	
	\item It is important to notice that due to this first initialization step applying a first potential $\mathsf{prg}_0$, the games treated by "PPI" and "QDPM" slightly differ. Over several iterations, the behaviour of both algorithms may therefore diverge. We observe empirically that while there may be rare differences between the number of iterations of the two algorithms, they remain negligible compared to the total number, and are not biased towards one or the other algorithm.
	
	\item Importantly, "QDPM" includes some smart implementation techniques to avoid considering all vertices in the computation of $\mathsf{prg}_0$ and $\mathsf{prg}_+$~\cite[Sect.~5]{BDM24}, improving their theoretical complexity upper bound to match the one of~\cite{BCDGR11}.
\end{itemize}
 
	 \subsection{Implementation differences comparison between state-weighted and vertex-weighted games}\label{app:implementationDifferences}
	We now propose an explanation why QDPM, as implemented by~\cite{BDM24}, performing an order of magnitude quicker than our implementation of PPI, basing on the fact that the QDPM is based on vertex-weighted games whereas PPI is based on edge-weighted ones.
Recall that we study games where weights are exponential (this is also the case when translating from parity games with linearly many priorities); therefore essentially the whole runtime is spent on performing such operations, which are either additions or comparisons.

These are broken into three categories:
\begin{enumerate}[1.]
    \item Updating the total potential $\Phi$ of each vertex. This requires roughly $n$ additions per iteration.
    \item Insertions in priority queues. This requires roughly $n\log n$ comparisons.
    \item Weight comparisons. This is where the difference lies. Here, we should compare the modified weights of two outgoing edges $vv_1$ and $vv_2$ from a given vertex $v$. In the edge-weighted scenario, this amounts to comparing $-\Phi(v) + \Phi(v_1) + w(vv_1)$ with $-\Phi(v) + \Phi(v_2) + w(vv_2)$, or equivalently $\Phi(v_1) + w(vv_1)$ with $\Phi(v_2) + w(vv_2)$. This requires \emph{2 additions and 1 comparison}, which amounts overall to roughly $2m$ additions and $m$ comparisons per iteration.
    In contrast, in the vertex-weighted scenario, $w(vv_1)=w(vv_2)$, so it is enough to compare $\Phi(v_1)$ with $\Phi(v_2)$, leading to $m$ comparisons, which saves on $2m$ costly additions per iteration.
\end{enumerate}

In total, we get the following numbers:

\vskip1em
\begin{center}
    \begin{tabular*}{0.8\linewidth}{l | c | c}
        type$\qquad $ & $\qquad $comparisons$\qquad $ &$\qquad $ additions \\ \hline
        edge-weighted $\qquad $& $n \log n$ & $\qquad $$n + 2m$ \\
        vertex-weighted & $n \log n$ & $\qquad $$n$ \\
    \end{tabular*}
\end{center}

\vskip1em

To give concrete estimates, we have compared runtimes between additions and comparisons (in the GMP libraries), for weights corresponding to the our biggest instances, reporting a ratio of over 4 orders of magnitude ($10^4$).
For sparse games ($m=2n$) this explains a factor of roughly 5 between the two implementations, which is more that the difference in runtimes.
For dense games, this gives a linear factor in $n$ on the number of additions (although arguably dense games typically have smaller weights).
\section{Friedmann's family of examples}\label{app:friedmann}

We include the performance (on number of iterations) of our algorithms against the family of examples proposed by Friedmann~\cite{Friedmann11b} (Figure~\ref{fig:Friedmann}). We also ran the same experiments in the randomized setting, where instances are first perturbed by random potentials sampled according to a normal distribution (Figure~\ref{fig:perturbedFriedmann}).

\begin{figure}[h]
	\begin{center}
		\includegraphics[width=0.65\linewidth]{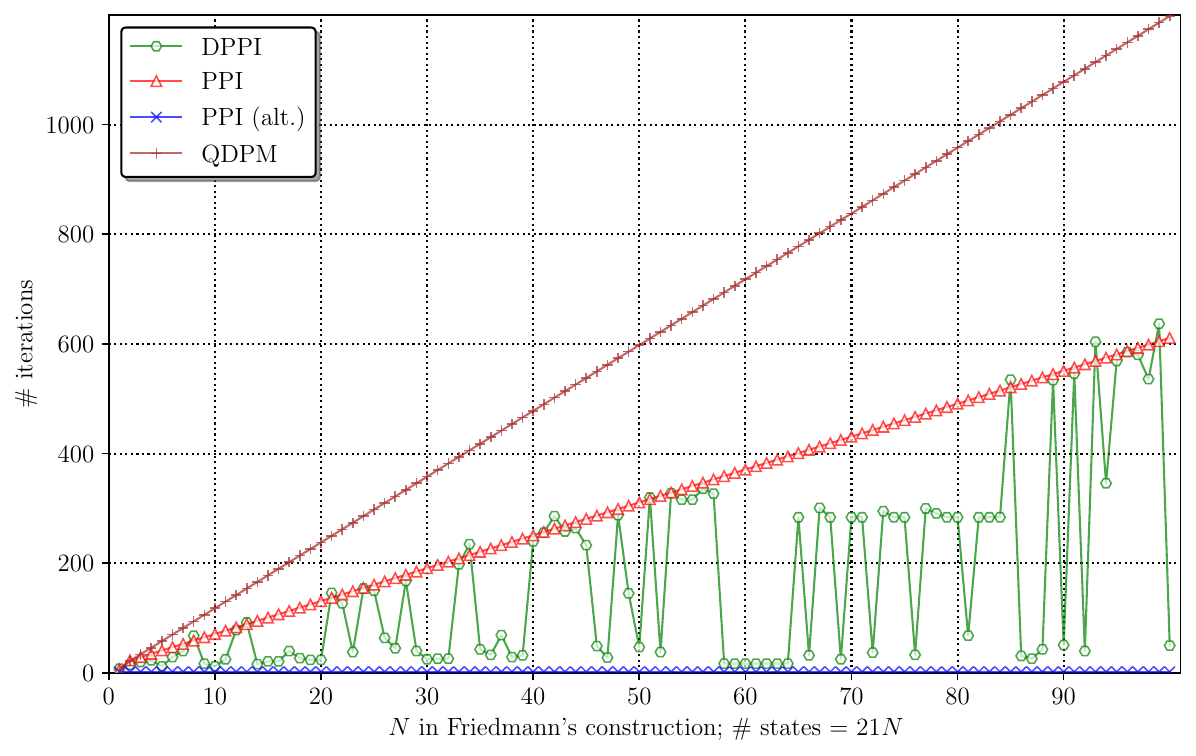}
		\caption{Number of iterations against Friedmann's hard examples.}
		\label{fig:Friedmann}
	\end{center}
\end{figure}

It is not surprising that "PPI-alt" performs a constant (namely, 2) number of iterations, because the dual algorithm immediately attracts the whole game to the single negative cycle (the instances are by no means designed to be resilient to such dual algorithms).
We indeed observe that QDPM and PPI perform a linear number of iterations, as claimed in the conclusion.
Remarkably, DPPI performs a constant number of iterations on roughly half of the instances, while on a few instances it performs slighly more iterations than PPI (this is not a contradition to Theorem~\ref{thmt@@lemComparingPotentials} as explained just above it).

\begin{figure}[h]
	\begin{center}
		\includegraphics[width=0.65\linewidth]{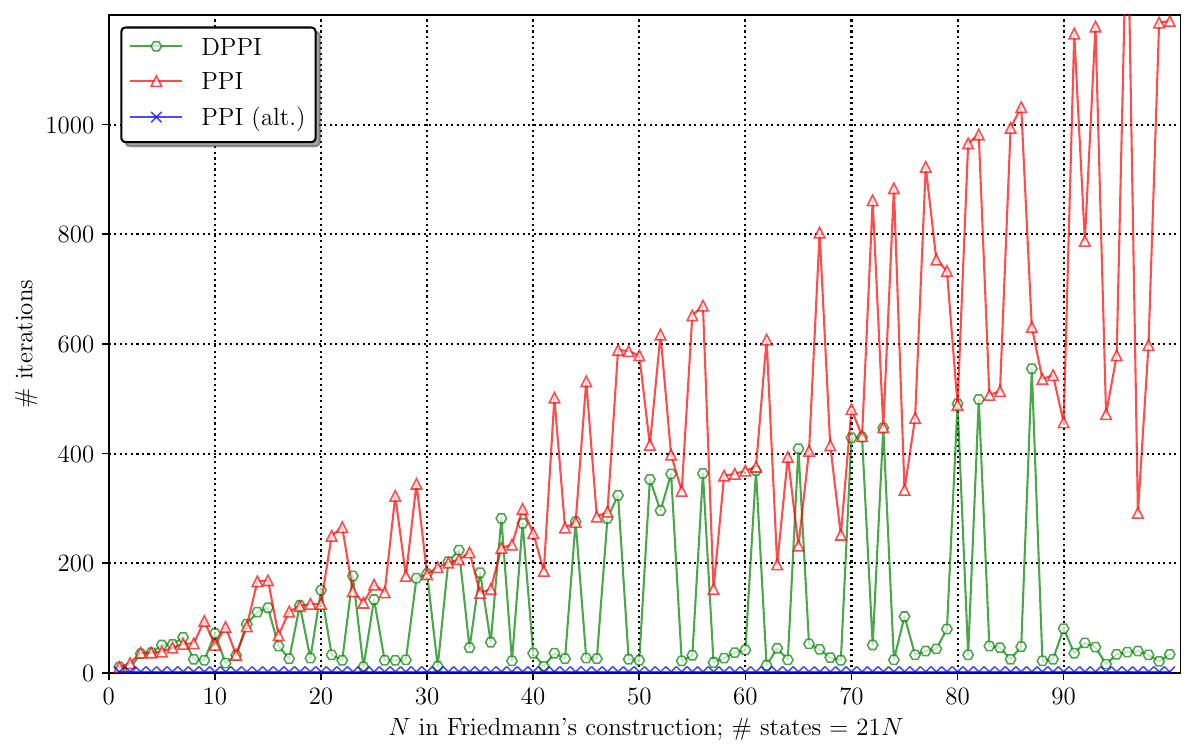}
		\caption{Number of iterations against Friedmann's hard examples in the randomized setting with initial perturbations.}
		\label{fig:perturbedFriedmann}
	\end{center}	
\end{figure}

In the randomized setting, we observe some speedup (for PPI) on some of the instances, which shows that it could make sense to run (in parallel) the algorithm on perturbated inputs.
However, we remark that in most of the cases the number of iterations is noticeably increased.
This constitutes by no means a serious experimental study of this phenomenon, which we leave to future work.
In particular, it would be more meaningful to run this experiment on instances requiring super linear number of updates (which are not available at the moment).

\else\fi

\end{document}